\def\bb{\mathbf{b}}
\def\bC{\mathbf{C}}
\def\blam{\boldsymbol{\lambda}}
\def\I{\mathcal{I}}
\def\RR{\mathcal{R}}
\def\bA{\mathbf{A}}
\def\bY{\mathbf{Y}}
\def\bX{\mathbf{X}}
\def\bD{\mathbf{D}}
\def\i{\mathbf{i}}
\def\bU{\mathbf{U}}
\def\I{\mathcal{I}}
\def\R{\mathbb{R}}
\def\C{\mathbb{C}}
\def\H{\mathbf{H}}
\def\Re{\mathcal{R}}
\def\Im{\mathcal{I}}
\def\x{\mathbf{x}}
\def\h{\mathbf{h}}
\def\bA{\mathbf{A}}
\def\bM{\mathbf{M}}
\def\bZ{\mathbf{Z}}
\def\bD{\mathbf{D}}
\def\bH{\mathbf{H}}
\def\bB{\mathbf{B}}
\def\bh{\mathbf{h}}
\def\bthe{\boldsymbol{\Theta}}
\def\bw{\mathbf{w}}
\def\u{\mathbf{u}}
\def\bv{\mathbf{v}}
\def\bw{\mathbf{w}}
\newtheorem{lemma}{Lemma}
\newtheorem{remark}{Remark}
\newtheorem{proposition}{Proposition}
\newtheoremstyle{noparens}%
  {}{}%
  {\itshape}{}%
  {\bfseries}{.}%
  { }%
  {\thmname{#1}\thmnumber{ #2}\mdseries\thmnote{ #3}}
\theoremstyle{noparens}
\title{Beyond-Diagonal RIS Architecture Design and Optimization under Physics-Consistent Models}
\author{\IEEEauthorblockN{Zheyu Wu, Matteo Nerini, \IEEEmembership{Member, IEEE}, and  Bruno Clerckx, \IEEEmembership{Fellow, IEEE}}
\thanks{Received 14 October 2025; revised 3 February 2026; accepted 12 March 2026. This work has been supported in part  by the UK Research and Innovation (UKRI) under Grant EP/Y004086/1, Grant EP/X040569/1, Grant EP/Y037197/1, Grant EP/X04047X/1, and Grant EP/Y037243/1. \emph{(Corresponding author: Bruno Clerckx.)}}

    	\thanks{Z. Wu and M. Nerini are with the Department of Electrical and Electronic Engineering, Imperial College London, London, SW7 2AZ, U.K. (email: \{zheyu.wu, m.nerini20\}@imperial.ac.uk). }
	
	\thanks{ B. Clerckx is with the Department of Electrical and Electronic Engineering, Imperial College London, London, SW7 2AZ, U.K, and also with Department of Electronic Engineering, Kyung Hee University, Yongin-si, Gyeonggi-do 17104, South Korea. (email: b.clerckx@imperial.ac.uk) }
  }
\date{today}
\begin{document}
\maketitle
\begin{abstract}

Reconfigurable intelligent surface (RIS) is a promising technology for  future wireless communication systems. Conventional RIS is constrained to a diagonal scattering matrix, which limits its flexibility. Recently, beyond-diagonal RIS (BD-RIS) has been proposed as a more general RIS architecture class that allows inter-element connections and shows great potential for performance improvement. Despite extensive progress on BD-RIS, most existing studies rely on simplified channel models that ignore practical electromagnetic (EM) effects such as mutual coupling and impedance mismatching. To address this gap, this paper investigates the architecture design and optimization of BD-RIS under the general physics-consistent model derived with multiport network theory in recent literature. Building on a compact reformulation of this model, we show that band-connected RIS achieves the same channel-shaping capability as fully-connected RIS, which extends existing results obtained  for conventional channel models. We then develop optimization methods under the general  physics-consistent model; specifically, we derive closed-form solutions for single-input single-output (SISO) systems, propose a globally optimal semidefinite relaxation (SDR)–based algorithm for single-stream multi-input multi-output (MIMO) systems, and design an efficient alternating direction method of multipliers (ADMM)–based algorithm for multiuser MIMO systems. Using the proposed algorithms, we conduct comprehensive simulations to evaluate the impact of various EM effects and approximations. The results indicate that the commonly adopted unilateral approximation provides sufficient accuracy in RIS-aided systems and can therefore be readily adopted to simplify the channel model, whereas mutual coupling among RIS elements should be properly taken into account in channel modeling. 

\end{abstract}
\begin{IEEEkeywords}
Beyond-diagonal reconfigurable intelligent surface, multiport network theory, physics-consistent model,  RIS architecture, optimization.
\end{IEEEkeywords}
\section{Introduction}
Reconfigurable intelligent surface (RIS) is a planar surface composed of a large number of low-cost, nearly passive reflecting elements. By coordinately reconfiguring these elements, the RIS can manipulate the propagation of incident electromagnetic waves, thereby shaping the wireless channel to enhance signal quality and coverage. 
It is widely recognized as a key technology to meet the growing demand for high-capacity, reliable, and energy-efficient communication in the 6G era \cite{RIS_tutorial,RIS_survey,RIS_survey2}.

RIS can be modeled as an antenna array connected to a reconfigurable impedance network \cite{BDRIS}, and its scattering behavior is typically characterized by the scattering matrix of the network. The conventional RIS adopts a single-connected architecture, where each reconfigurable element is connected to a tunable impedance to ground without inter-element connections. Consequently, its scattering matrix, also known as the phase shift matrix, is diagonal. To enhance the wave manipulation capability of conventional RIS, an advanced generalization named beyond-diagonal RIS (BD-RIS) has been recently proposed. 
By introducing interconnections among reconfigurable elements with tunable impedances, BD-RIS provides significantly greater design flexibility than conventional RIS and is mathematically characterized by a scattering matrix beyond the diagonal form (hence the term BD-RIS).

The concept of BD-RIS was first proposed in \cite{BDRIS,group_conn}, where the reconfigurable elements are interconnected through tunable impedances in groups (referred to as group-connected RIS) or all together (referred to as fully-connected RIS). In particular, the conventional RIS can be viewed as a special case of group-connected RIS in which each element forms an individual group. Thanks to the additional design flexibility provided by inter-element connections, BD-RIS  is able  to achieve higher channel gain \cite{BDRIS} and sum-rate \cite{group_conn} compared to conventional RIS. In addition, unlike conventional RIS which only reflects signals, BD-RIS can work under multiple modes \cite{group_conn}, including transmissive, reflective, and hybrid modes, and can also be deployed in a multi-sector setup to greatly enhance signal coverage \cite{coverage}. Since then, BD-RIS has attracted significant research attention. Extensive studies have been devoted to designing efficient optimization algorithms \cite{closeform,PDD,twostage,wu,MIMObcapacity},  proposing new BD-RIS architectures that balance complexity and performance \cite{tree,qstem,graph}, assessing the performance of different BD-RIS architectures under non-ideal hardware \cite{discrete,mutualcoupling,mutualcoupling2,lossy,lossy_peng}, and exploring new application scenarios aided by BD-RIS \cite{RSMA0,UAV,attack,Zheng2025ISAC};  see \cite{BDRIS_survey,tutorialbdris} for more detailed summaries.
 
Despite substantial research progress on BD-RIS in various aspects, most studies rely on simplified channel models that do not fully account for the electromagnetic (EM) properties of the system.  
To address this limitation, some works have developed physics-consistent models for RIS-aided\footnote{The terminology ``RIS'' includes both conventional RIS and BD-RIS.} communication systems that capture practical non-idealities such as imperfect matching and mutual coupling among the transmit, RIS, and receive antennas \cite{BDRIS,Zparameter,Nossek,Abrardo,generalmodel}.  In particular, the authors in \cite{BDRIS} and \cite{Zparameter} developed RIS-aided channel models based on multiport network theory with the scattering and impedance parameters, respectively. The equivalence between these two parameters has been later analyzed in \cite{Nossek,Abrardo,generalmodel}, where in \cite{generalmodel} a universal framework has been developed showing also the equivalence with the admittance parameters.
However,  the general physics-consistent models based on multiport network theory, namely accounting for all the practical EM non-idealities,  yield complicated expressions in which the role of the RIS is not explicitly visible (see Section \ref{sec:generalmodel}). To make these models more tractable, various approximation techniques have been employed, such as adopting the unilateral approximation \cite{UA} (which assumes the transmission distance between a transmitter and a receiver to be large enough such that the currents at the transmitter are not affected by the currents at the receiver), neglecting mutual coupling, and assuming perfectly matched antenna arrays.  In particular, the conventional channel model adopted in the literature has been shown to be an approximation of the general physics-consistent model that incorporates all the aforementioned simplifications, and additionally neglects the specular reflection caused by the structural scattering of the RIS \cite{Nossek,Abrardo,generalmodel}.

To date, there are still several open questions regarding the physics-consistent models of RIS-aided systems. First, since the RIS affects the channel model in a complicated manner, all existing analyses rely on the unilateral approximation \cite{UA} for simplification. The literature often assumes that the unilateral approximation remains accurate in practice, as transmission distances are believed to be large enough. However, this has still not been rigorously verified in RIS-aided systems. 
  Second, the works \cite{mutualcoupling,mutualcoupling2} have demonstrated that mutual coupling among RIS elements has a noticeable impact on the channel gain of RIS-aided SISO systems (under the unilateral approximation). However, the study of the fundamental limits of RIS with mutual coupling in the more general multi-antenna and multiuser systems remains a research gap.
Third, the optimality of the RIS architectures proposed in \cite{tree} for multi-input single-output (MISO) systems and in \cite{graph} for general multi-input multi-output (MIMO) systems holds only for the conventional channel model. It remains unclear whether these architectures are still optimal when accounting for practical non-idealities such as  imperfect matching and mutual coupling captured by  physics-consistent models.

In this paper, we address the aforementioned research gaps by working with the general physics-consistent model, namely the one presented in Section \ref{sec:generalmodel}. We identify the optimal RIS architectures  and develop efficient optimization algorithms under this general model. Our results and algorithms unify and extend existing studies based on conventional simplified channel models. Using the proposed algorithms, we further evaluate the impact of various EM effects and commonly adopted approximations on system performance. The main contributions of this paper are summarized as follows.  

\begin{itemize}
\item \emph{A compact form of the general physics-consistent model}. 
First, by carefully examining the structure  of the general physics-consistent model in \cite{generalmodel}, we transform it into a compact form as follows: $\mathbf{H}=\bar{\mathbf{H}}_{RT}+\bar{\mathbf{H}}_{RI}\bar{\bthe}\bar{\mathbf{H}}_{IT}$. This  form shares the same mathematical structure as the  conventional channel model commonly adopted in the literature, 
but the expression of each term is more complicated: 
they are functions of the impedance matrices of/among the transmitter, RIS, and receiver. The effect of RIS is fully captured by $\bar{\bthe}$, but in a more involved manner (it does not correspond to the actual scattering matrix of the RIS); see Proposition \ref{pro:general} for a rigorous statement. This compact form makes the effect of the RIS explicitly visible and facilitates further analysis and optimization.

\item \emph{Architecture design under the general physics-consistent model.} Building on the compact form, we identify the optimal RIS architecture under the general physics-consistent model. We show that the band-connected RIS, which was originally proposed in \cite{graph} as the optimal RIS architecture under the conventional channel model, achieves the same performance as fully-connected RIS for MIMO systems under the general physics-consistent model as well. In particular, tree-connected RIS is optimal for MISO systems. These results  generalize existing results in \cite{tree,graph,mutualcoupling2} from conventional channel models to the general physics-consistent model and indicate that practical non-idealities such as mutual coupling and imperfect matching do not affect the optimality of  RIS architectures.

\item \emph{Optimization under the general physics-consistent model.} We study the optimization of RIS under the general physics-consistent  model for different systems. For SISO systems, we show that the approach in \cite{mutualcoupling2} can be directly extended to obtain closed-form optimal solutions that maximize the channel gain of the general physics-consistent model.  For single-stream MIMO systems, we carefully reformulate the receive power maximization problem and develop a novel semidefinite relaxation (SDR)-based algorithm. We prove that the proposed method attains the global optimum. To the best of our knowledge, this is the first approach that guarantees global optimality, whereas existing methods fail to do so even under the conventional channel model.
For  multiuser MIMO systems, we extend the algorithm in \cite{wu} for sum-rate maximization to the  general physics-consistent model. 

\item \emph{Performance assessment of RIS under the general physics-consistent model.} 
Using our optimization framework, we study the impact of various EM effects and approximations through simulations. Our results reveal that stronger mutual coupling can enhance system performance in SISO, MIMO, and multiuser MIMO systems under Rayleigh fading channels, thereby extending the conclusion of \cite{mutualcoupling2} which considered only the SISO case.  We also show that neglecting the mutual coupling among RIS elements in optimization leads to performance degradation. In contrast, the unilateral approximation remains highly accurate as long as the transmission distance is on the order of a few wavelengths. Hence, in typical RIS-aided wireless communication scenarios, this approximation can be reliably employed to simplify the channel model.
\end{itemize}

\emph{Organization}: The rest of the paper is organized as follows.  In Section \ref{sec:model}, we introduce  physics-consistent  models of RIS-aided communication systems following previous literature. In Section \ref{sec:3}, we propose a compact reformulation of the general physics-consistent model. Sections \ref{sec:4} and \ref{sec:opt} discuss architecture design and optimization under the general  physics-consistent model, respectively. In Section \ref{sec:simulation}, we present simulation results to demonstrate  the impact of various EM effects and approximations. Finally, the paper is concluded in Section \ref{sec:7}.

\emph{Notations}: Throughout the paper, we use $x$, $\x$, $\mathbf{X}$, and $\mathcal{X}$ to denote a scalar, column vector, matrix, and set, respectively. For a vector $\x$, $[\x]_i$ denotes its $i$-th entry. For a matrix $\mathbf{X}$,  $[\bX]_{\mathcal{S}_1,\mathcal{S}_2}$ denotes its submatrix with rows indexed by $\mathcal{S}_1$ and columns indexed by $\mathcal{S}_2$, and in particular, $[\mathbf{X}]_{i,j}$ denotes its $(i,j)$-th entry;  $\bX^T$, $\bX^H$, $\bX^{-1}$, $\text{tr}(\bX)$, $\text{vec}(\bX)$, $\RR(\bX)$, $\I(\bX)$, and $\|\bX\|_F$  return the transpose,  the Hermitian transpose, the inverse, the trace, the column-wise vectorization, the real part, the imaginary part, and the Frobenius norm of $\bX$,  respectively; $\bX\succeq (\succ)\mathbf{0}$ means that $\bX$ is positive semidefinite (definite); $\bX^{-\frac{1}{2}}$ refers to the square root of $\bX$ (given that $\bX$ is positive definite).     The notation  $\|\cdot\|_2$  denotes the  $\ell_2$-norm of a vector or the spectral norm of a matrix. The symbols $\mathbf{0}$ and  $\mathbf{I}$  refer to an all-zero matrix and an identity matrix, respectively. Finally, $\mathsf{i}$ represents the imaginary unit.

\section{Physics-Consistent RIS-Aided Channel Models Based on Multiport Network Analysis}\label{sec:model}
Consider a MIMO\footnote{The term ``MIMO" is used here in a general sense, including both point-to-point MIMO and multiuser MIMO systems.} wireless communication system with $N_T$ transmit antennas and $N_R$ receive antennas aided by a  RIS consisting of $N_I$ elements. As discussed in \cite{generalmodel}, the overall wireless channel can be regarded as an $N$-port network, where $N=N_T+N_I+N_R$, and can be modeled based on the multiport network theory \cite{microwavebook}. In this section, we introduce the physics-consistent channel models for RIS-aided systems based on the impedance parameters developed in \cite{generalmodel,Nossek,Zparameter},  along with several widely adopted approximations and the corresponding simplified models. 
\begin{figure}
\includegraphics[width=0.45\textwidth]{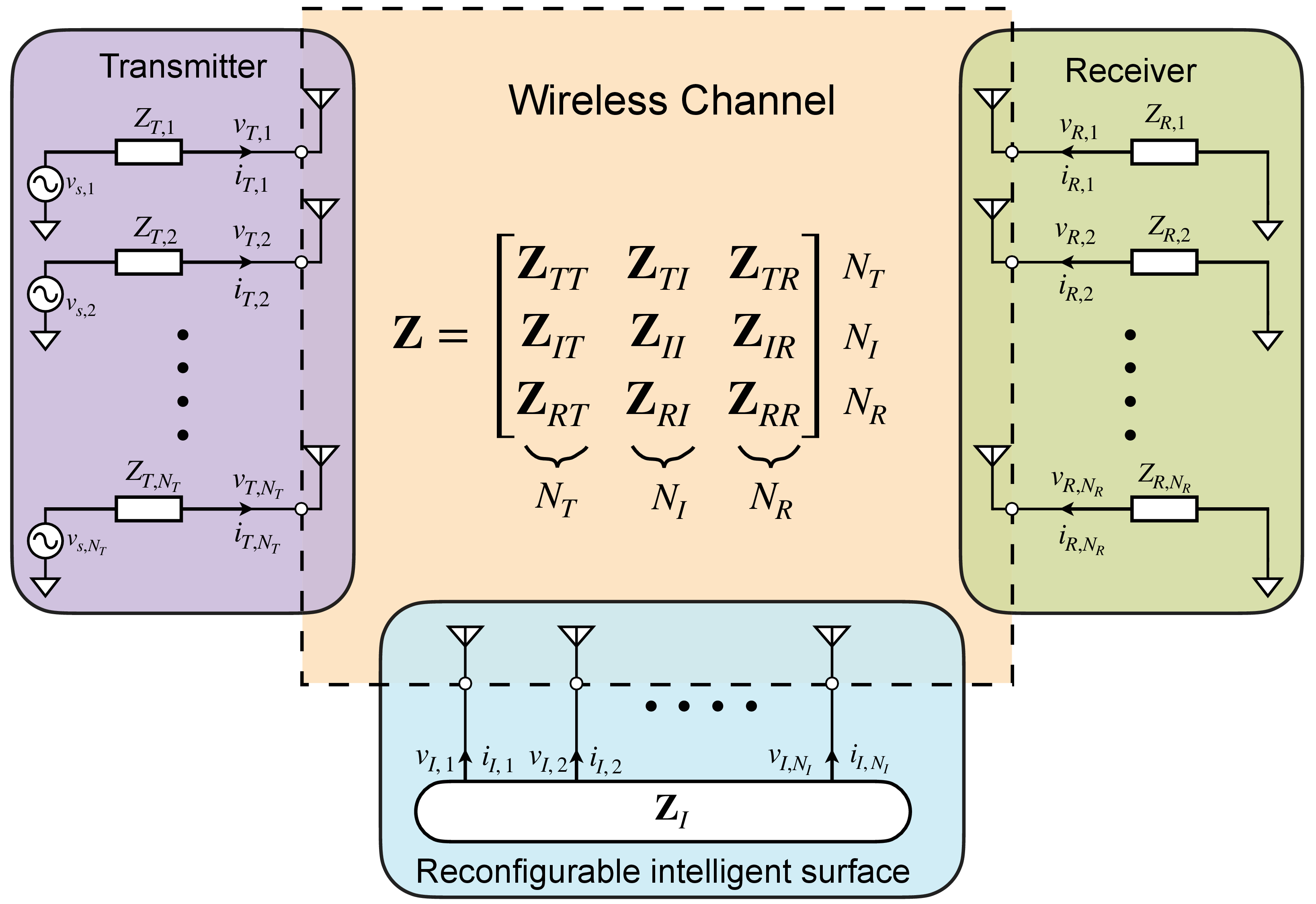}
\caption{Multiport model of a RIS-aided MIMO system.}
\centering
\label{multiport}
\end{figure}
\subsection{The General Physics-Consistent RIS-Aided Channel Model}\label{sec:generalmodel}
Consider the $N$-port network model for RIS-aided MIMO systems depicted in Fig. \ref{multiport}.  Let $\bv_{i}\in\C^{N_i}$,  $i\in\{T,I,R\}$, denote the voltages at the transmit, RIS, and receive antenna arrays, respectively. The overall channel matrix $\bH \in \mathbb{C}^{N_R \times N_T}$ is defined as the matrix that characterizes the linear relationship between the transmit and receive voltages, i.e.,
\begin{equation}\label{vR=HvT}
\bv_R=\bH\bv_T.
\end{equation} 
In the following, we derive the expression for $\bH$, 
briefly recalling the derivations of \cite{generalmodel}.

According to multiport network theory \cite{microwavebook}, the electrical behavior of a multiport network can be characterized by the impedance matrix of the network $\bZ\in\C^{N\times N}$, which relates the voltage at the $N$ ports $\bv\in\C^N$ to the current at the $N$ ports $\mathbf{i}\in\C^N$ through \begin{equation}\label{v=Zi}
\mathbf{v}=\mathbf{Z}\mathbf{i}.
\end{equation}
For the RIS-aided MIMO system considered in Fig. \ref{multiport},  
$\bv=[\bv_T^T,\bv_I^T,\bv_R^T]^T$, $\i=[\i_T^T,\i_I^T,\i_R^T]^T$,  where $\bv_{i}\in\C^{N_i}$ and $\i_i\in\C^{N_i}$, $i\in\{T,I,R\}$, denote the voltage and current at the transmit, RIS, and receive antenna arrays, respectively,  
and $\bZ$ is given by 
$$\bZ=\left[\begin{matrix}\bZ_{TT}&\bZ_{TI}&\bZ_{TR}\\\bZ_{IT}&\bZ_{II}&\bZ_{IR}\\\bZ_{RT}&\bZ_{RI}&\bZ_{RR}\end{matrix}\right],$$ 
where  $\bZ_{TT}\in\C^{N_T\times N_T}$, $\bZ_{II}\in\C^{N_I\times N_I}$, and $\bZ_{RR}\in\C^{N_R\times N_R}$ are the impedance matrices of the antenna arrays at transmitter, RIS, and receiver, respectively.  Their diagonal entries represent antenna self-impedance, while the off-diagonal entries capture antenna mutual coupling. 
  The submatrices  $\bZ_{IT}\in\C^{N_I\times N_T}$, $\bZ_{RI}\in\C^{N_R\times N_I}$, and  $\bZ_{RT}\in\C^{N_R\times N_T}$ refer to the impedance matrices from transmitter to RIS, from RIS to receiver, and from transmitter  to receiver, respectively. Similarly,  $\bZ_{TI}\in\C^{N_T\times N_I}$,  $\bZ_{IR}\in\C^{N_I\times N_R}$, and $\bZ_{TR}\in\C^{N_T\times N_R}$  refer to the  impedance matrices from RIS to transmitter, from receiver to RIS, and from receiver to transmitter, respectively.  Assuming channel reciprocity, we have  $\bZ_{RT}=\bZ_{TR}^T$, $\bZ_{RI}=\bZ_{IR}^T$, $\bZ_{IT}=\bZ_{TI}^T$. 
  
  By individually analyzing the circuits at the transmitter, RIS, and receiver, we can further characterize the relationship between $\bv_i$ and $\i_i$, where $i \in \{T, I, R\}$.  First, at the transmitter, each antenna is connected with a source voltage and a source impedance, yielding the following relationship:
\begin{equation}\label{vT}
\bv_T=\bv_{s,T}-\bZ_T\i_T,
\end{equation}
where $\bv_{s,T}=[v_{s,1},v_{s,2},\dots,v_{s,N_T}]\in\C^{N_T}$ and $\bZ_T=\text{diag}(Z_{T,1},Z_{T,2}\dots,Z_{T,N_T})\in\C^{N_T\times N_T}$; $v_{s,n}$ and $Z_{T,n}$ are the source voltage and source impedance of the $n$-th transmit antenna, respectively; see Fig. \ref{multiport}.

The RIS is modeled as $N_I$ antennas connected to an $N_I$-port reconfigurable impedance network \cite{generalmodel}. As a result, $\bv_I$ and $\mathbf{i}_I$ are related by 
\begin{equation}\label{vI}
\bv_I=-\mathbf{Z}_I\mathbf{i}_I,
\end{equation}
where $\mathbf{Z}_I\in\C^{N_I\times N_I}$ is the impedance matrix of the reconfigurable impedance network. In this paper, we adopt the common assumption that the reconfigurable impedance network is reciprocal and lossless, which implies that $\bZ_I$ is symmetric and purely imaginary \cite[Chapter 4]{microwavebook}. 

At the receiver, each antenna is connected with a load  impedance, and $\bv_R$ and $\mathbf{i}_R$ satisfies 
\begin{equation}\label{vR}
\bv_R=-\mathbf{Z}_R\mathbf{i}_R,
\end{equation}
where $\bZ_R=\text{diag}(Z_{R,1},Z_{R,2}\dots,Z_{R,N_R})\in\C^{N_R\times N_R}$;  $Z_{R,n}$ is the load impedance of the  $n$-th receive antenna. 

Combining \eqref{vR=HvT} -- \eqref{vR}, it has been shown in \cite[Section III]{generalmodel}  that 
\begin{equation}\label{general_model}
\mathbf{H}=\widetilde{\mathbf{Z}}_{R T} \widetilde{\mathbf{Z}}_{T T}^{-1}.
\end{equation}
Here, $\widetilde{\mathbf{Z}}:=\left(\mathbf{I}+{\mathbf{Z}}_0 \mathbf{Z}^{-1}\right)^{-1}$ and is partitioned as
$$
\widetilde{\mathbf{Z}}=\left[\begin{array}{ccc}
\widetilde{\mathbf{Z}}_{T T} & \widetilde{\mathbf{Z}}_{T I} & \widetilde{\mathbf{Z}}_{T R} \\
\widetilde{\mathbf{Z}}_{I T} & \widetilde{\mathbf{Z}}_{I I} & \widetilde{\mathbf{Z}}_{I R} \\
\widetilde{\mathbf{Z}}_{R T} & \widetilde{\mathbf{Z}}_{R I} & \widetilde{\mathbf{Z}}_{R R}
\end{array}\right],
$$
where $\bZ_0=\operatorname{blkdiag}(\bZ_T,\bZ_I,\bZ_R)$.

The above derivation makes no assumption on the communication scenario or the transmit, RIS, and receive antennas.  Thus, the model in \eqref{general_model} is accurate.  However, its analytical expression is very complex.  To improve tractability, several simplified models have been proposed  based on different assumptions. In the following, we introduce  commonly adopted assumptions and the corresponding approximate models in the literature. 

\subsection{Approximations and Approximate RIS-aided Channel Models}\label{sec:approximation}
 \emph{(A1) Unilateral approximation}:  The ``unilateral approximation'' assumes that the distances between the transmitter, RIS, and receiver are sufficiently large such that the electromagnetic effect from the receiving devices to the transmitting devices are negligible \cite{UA}. Under this assumption, we can set $\bZ_{TI}= \mathbf{0}, \bZ_{TR}=\mathbf{0}, \bZ_{IR}=\mathbf{0}$ since these feedback links do not affect the channel matrix, which leads to the following approximate channel model \cite[Eq. (64)]{generalmodel}:
$$\bH_{\text{app,1}}\hspace{-0.06cm}=\hspace{-0.06cm}\bZ_R(\bZ_R+\bZ_{RR})^{-1}\hspace{-0.1cm}\left(\bZ_{RT}\hspace{-0.05cm}-\hspace{-0.1cm}\bZ_{RI}(\bZ_I\hspace{-0.06cm}+\hspace{-0.06cm}\bZ_{II})^{-1}\bZ_{IT}\right)\bZ_{TT}^{-1}.$$

 \emph{(A2) Assumptions on the transmitter and receiver}:
The following two assumptions are commonly imposed at the transmitter and receiver for further simplification.  First, we assume that the source impedances $\{Z_{T,n}\}_{1\leq n\leq N_T}$ at the transmitter and the load impedances  $\{Z_{R,n}\}_{1\leq n\leq N_R}$ at the receiver  are equal to the reference impedance $Z_0$, commonly set as $Z_0=50\;\Omega$, which gives  $\bZ_R=Z_0 \mathbf{I}$ and  $\bZ_T=Z_0\mathbf{I}$. Second, we assume that the 
 transmit and receive antennas are perfectly matched with no mutual coupling, and thus 
$\bZ_{TT}=Z_0\mathbf{I}$ and $\bZ_{RR}=Z_0\mathbf{I}.$ Under (A1) and (A2), the channel model simplifies to \cite[Eq. (76)]{generalmodel}
$$\bH_{\text{app,2}}=\frac{1}{2Z_0}\left(\bZ_{RT}-\bZ_{RI}(\bZ_I+\bZ_{II})^{-1}\bZ_{IT}\right).$$

\emph{(A3) Perfect matching and no mutual coupling among RIS}: This assumption gives $\bZ_{II}=Z_0\mathbf{I}$. Under (A1) -- (A3), the channel model simplifies to \cite[Eq. (77)]{generalmodel}
$$\bH_{\text{app,3}}=\frac{1}{2Z_0}\left(\bZ_{RT}-\bZ_{RI}(\bZ_I+Z_0\mathbf{I})^{-1}\bZ_{IT}\right).$$  

We remark that, in addition to the impedance matrix $\mathbf{Z}$ ($Z$-parameter), the electrical properties of a multiport network can also be characterized by the admittance matrix $\bY$ ($Y$-parameter) and the scattering matrix $\mathbf{S}$ ($S$-parameter) \cite{microwavebook}, which are related to $\bZ$ by
\begin{equation}\label{Y=invZ}
\bY=\bZ^{-1}
\end{equation} and 
$$\mathbf{S}=(\bZ+Z_0\mathbf{I})^{-1}(\bZ-Z_0\mathbf{I}),$$
respectively. 
All the channel models introduced above (i.e., $\bH$, $\bH_{\text{app},1}$, $\bH_{\text{app},2}$, $\bH_{\text{app},3}$) can also be  expressed through impedance and scattering matrices; see details in \cite{generalmodel}.  In particular, the channel model $\bH_{\text{app},3}$, when expressed via the scattering matrix, reduces to the following conventional form:
\begin{equation}\label{Happ3}
\bH_{\text{app},3}=\mathbf{H}_{RT}+\mathbf{H}_{RI}\bthe\mathbf{H}_{IT},
\end{equation}
where $\mathbf{H}_{RT}:=\mathbf{S}_{RT}$, $\mathbf{H}_{RI}:=\mathbf{S}_{RI}$, and $\mathbf{H}_{IT}:=\mathbf{S}_{IT}$ denote the channels (scattering matrices) from transmitter to receiver, from RIS to receiver, and from transmitter to RIS, respectively.

Most existing RIS literature employs the channel model in \eqref{Happ3}, which is derived under Assumptions (A1)\,--\,(A3). A few  works also account for mutual coupling among RIS antennas and employ the approximate channel model $\mathbf{H}_{\text{app,2}}$ \cite{mutualcoupling,mutualcoupling2}. 
 To the best of our knowledge, the general form in \eqref{general_model} has not been considered in existing works, mainly due to its analytical complexity. 
  
   In this paper, we \emph{derive a compact expression for the general physics-consistent model in \eqref{general_model} and  investigate both architecture design and optimization within this general framework.} Throughout the paper, we  assume that the RIS is lossless and reciprocal, an assumption commonly adopted for reconfigurable impedance networks. By working with \eqref{general_model}, our results and algorithms are applicable without relying on any assumptions about the wireless channel or the transmit,  RIS, and receive antennas. 
   Our aim is to \emph{(i) identify the optimal RIS architecture under the general physics-consistent model; and (ii) provide a systematic comparison of different approximations and channel models using the proposed algorithms.}

  \section{A Compact Form  of the General Physics-Consistent Model}\label{sec:3}
It is difficult to perform signal processing based on \eqref{general_model}, as the design variable related to  RIS, i.e., its impedance matrix $\bZ_I$,  
is  hidden in the channel expression inside the blocks of $\widetilde{\mathbf{Z}}$.
In this section, we derive a compact form of \eqref{general_model} to facilitate further analysis.\vspace{-0.1cm}
\subsection{An Explicit Expression for the General Physics-Consistent Model} We begin by deriving an equivalent form of \eqref{general_model} in which the effect of RIS is explicit.

Observe that the impedance matrix $\bZ$ affects the model in \eqref{general_model} only through its inverse. Therefore, it is convenient to express \eqref{general_model} based on the admittance matrix $\bY=\bZ^{-1}$. Let $\bY_T=\bZ_T^{-1}$ and $\bY_R=\bZ_R^{-1}$, which are two diagonal matrices whose diagonal elements denote the source and  load admittances, respectively, and let $\bY_I=\bZ_I^{-1}$ be the admittance matrix of the $N_I$-port reconfigurable impedance network.  
The matrix $\widetilde{\bZ}$ involved in the general model \eqref{general_model} can then be expressed as 
$\widetilde{\bZ}=\left(\bY_0+\mathbf{Y}\right)^{-1}\bY_0,
$
where $\bY_0=\operatorname{blkdiag}(\bY_T,\bY_I,\bY_R)$.
Denote  $\widetilde{\bY}:=(\bY_0+\bY)^{-1}$, partitioned as   
$$\widetilde{\bY}=\left[\begin{matrix}\widetilde{\bY}_{TT}&\widetilde{\bY}_{TI}&\widetilde{\bY}_{TR}\\\widetilde{\bY}_{IT}&\widetilde{\bY}_{II}&\widetilde{\bY}_{IR}\\\widetilde{\bY}_{RT}&\widetilde{\bY}_{RI}&\widetilde{\bY}_{RR}\end{matrix}\right].$$ 
Clearly,  $\widetilde{\bZ}_{TT}=\widetilde{\bY}_{TT}\bY_T$ and $\widetilde{\bZ}_{RT}=\widetilde{\bY}_{RT}\bY_T$, 
and thus 
$$\bH=\widetilde{\bZ}_{RT}\widetilde{\bZ}_{TT}^{-1}=\widetilde{\bY}_{RT}\widetilde{\bY}_{TT}^{-1}.$$
We now derive the explicit expressions for $\widetilde{\bY}_{TT}$ and $\widetilde{\bY}_{RT}$. Suppose that the matrix $\bY$ is partitioned in the same way as $\widetilde{\bY}$,  
then 
$$\widetilde{\bY}=\left[
\begin{matrix}
\mathbf{Y}_T+\mathbf{Y}_{TT}&\mathbf{Y}_{TI}&\mathbf{Y}_{TR}\\
\mathbf{Y}_{IT}&\mathbf{Y}_I+\mathbf{Y}_{II}&\mathbf{Y}_{IR}\\
\mathbf{Y}_{RT}&\mathbf{Y}_{RI}&\mathbf{Y}_R+\mathbf{Y}_{RR}
\end{matrix}\right]^{-1}.$$
The key step is to apply the following block matrix inversion formula to $\widetilde{\bY}$: 
\begin{equation}\label{block_inverse}
\begin{aligned}
\left[\begin{matrix}
\bA & \bB \\
\bC & \bD
\end{matrix}\right]^{-1}\hspace{-0.2cm}=\left[\begin{matrix}
\mathbf{S}_\bD^{-1} & -\mathbf{S}_\bD^{-1} \bB \bD^{-1} \\
-\bD^{-1} \bC\mathbf{S}_\bD^{-1} & \bD^{-1}\hspace{-0.05cm}+\bD^{-1} \bC\mathbf{S}_\bD^{-1} \bB \bD^{-1}
\end{matrix}\right],
\end{aligned}
\end{equation} where $\mathbf{S}_\bD=\bA-\bB \bD^{-1} \bC$ is the Schur complement of $\bD$. Specifically, define the block components as 
$$
\begin{aligned}
\bA:&=\bY_T+\mathbf{Y}_{TT}, ~~~~\bB:=[\mathbf{Y}_{TI}~~\mathbf{Y}_{TR}],~\\
\bC:&=
\left[\begin{array}{c}
\mathbf{Y}_{IT}\\
\mathbf{Y}_{RT}\\
\end{array}\right],~~\bD:=\left[\begin{array}{cc}
\bY_I+\mathbf{Y}_{II}&\mathbf{Y}_{IR}\\
\mathbf{Y}_{RI}&\mathbf{Y}_R+\mathbf{Y}_{RR}\end{array}\right].
\end{aligned}
$$
Applying \eqref{block_inverse}, we get 
$$\widetilde{\bY}_{TT}=\mathbf{S}_\bD^{-1},~\widetilde{\bY}_{RT}=-[\mathbf{0}_{N_R\times N_I}~~\mathbf{I}_{N_R}]\bD^{-1} \bC\mathbf{S}_\bD^{-1}.$$
It follows that \begin{equation*}
\begin{aligned}
\bH&=\hspace{-0.05cm}-[\mathbf{0}_{N_R\times N_I}~~\mathbf{I}_{N_R}]\,\bD^{-1}\bC\\
&=\hspace{-0.05cm}-[\mathbf{0}_{N_R\times N_I}~~\mathbf{I}_{N_R}]
\left[\begin{matrix}
\bY_I+\hspace{-0.05cm}\mathbf{Y}_{II}&\hspace{-0.05cm}\mathbf{Y}_{IR}\\
\mathbf{Y}_{RI}&\hspace{-0.2cm}\bY_R\hspace{-0.05cm}+\hspace{-0.05cm}\mathbf{Y}_{RR}\end{matrix}\right]^{-1}\hspace{-0.05cm}
\left[\begin{matrix}
\mathbf{Y}_{IT}\\
\mathbf{Y}_{RT}\\
\end{matrix}\right].
\end{aligned}
\end{equation*}
Applying the block matrix inversion formula  in \eqref{block_inverse} again yields an explicit expression for 
$\bH$, which is given in \eqref{general} on top of the next page. The routine algebraic steps are omitted for brevity.
 
\begin{figure*}
\begin{equation}\label{general}
\mathbf{H}=\left(\mathbf{Y}_R+\mathbf{Y}_{RR}\right)^{-1}\left(-\mathbf{Y}_{RT}+\mathbf{Y}_{RI}\left(\mathbf{Y}_I+\mathbf{Y}_{II}-\mathbf{Y}_{IR}(\mathbf{Y}_{R}+\mathbf{Y}_{RR})^{-1}\mathbf{Y}_{RI}\right)^{-1}\left(\mathbf{Y}_{IT}-\mathbf{Y}_{IR}(\mathbf{Y}_R+\mathbf{Y}_{RR})^{-1}\mathbf{Y}_{RT}\right)\right).
\end{equation}
---------------------------------------------------------------------------------------------------------------------------------------------------------
\end{figure*}

\begin{remark}
Under the unilateral approximation introduced in Section \ref{sec:approximation}, the  model in \eqref{general} reduces to the  approximate model $\bH_{\text{app},1}.$ Specifically, with $$\bZ_{\text{app}}=\left[\begin{matrix}\bZ_{TT}&\mathbf{0}&\mathbf{0}\\\mathbf{Z}_{IT}&\mathbf{Z}_{II}&\mathbf{0}\\\mathbf{Z}_{RT}&\mathbf{Z}_{RI}&\mathbf{Z}_{RR}\end{matrix}\right],$$
the corresponding approximate admittance matrix is given by 
$$\bY_{\text{app}}=\bZ_{\text{app}}^{-1}=\left[\begin{matrix}\bY^{\text{app}}_{TT}&\mathbf{0}&\mathbf{0}\\\mathbf{Y}_{IT}^{\text{app}}&\mathbf{Y}_{II}^{\text{app}}&\mathbf{0}\\\mathbf{Y}_{RT}^{\text{app}}&\mathbf{Y}_{RI}^{\text{app}}&\mathbf{Y}_{RR}^{\text{app}}\end{matrix}\right],$$
where 
$\bY_{ii}^{\text{app}}=\bZ_{ii}^{-1},~i\in\{T,I,R\},$
$$\mathbf{Y}_{IT}^{\text{app}}=-\bZ_{II}^{-1}\bZ_{IT}\bZ_{TT}^{-1},~\mathbf{Y}_{RI}^{\text{app}}=\bZ_{RR}^{-1}\bZ_{RI}\bZ_{II}^{-1},$$
$$\bY_{RT}^{\text{app}}=\bZ_{RR}^{-1}(-\bZ_{RT}+\bZ_{RI}\bZ_{II}^{-1}\bZ_{IT})\bZ_{TT}^{-1}.$$
Here, we use the notation ``app'' to emphasize that the admittance matrix and each of its blocks are approximate, namely obtained by inverting the approximate  impedance matrix $\bZ_{\text{app}}$  with the unilateral approximation applied. 
Substituting $\bY_{\text{app}}$ into the general model in \eqref{general}, we obtain  
\begin{equation*}\label{general_app}
\mathbf{H}_{\text{app}}\hspace{-0.05cm}=\hspace{-0.1cm}\left(\mathbf{Y}_R\hspace{-0.03cm}+\hspace{-0.03cm}\mathbf{Y}_{RR}^\text{app}\right)^{-1}\hspace{-0.1cm}\left(-\mathbf{Y}_{RT}^\text{app}\hspace{-0.03cm}+\hspace{-0.03cm}\mathbf{Y}_{RI}^\text{app}\left(\mathbf{Y}_I\hspace{-0.03cm}+\hspace{-0.03cm}\mathbf{Y}_{II}^\text{app}\right)^{-1}\mathbf{Y}_{IT}^\text{app}\right).
\end{equation*}
Further substituting the expressions of $\bY_{\text{app}}$ into the above equation yields $\bH_{\text{app,1}}$.
\end{remark}

In the above derivation, we have started from the $Z$-parameter-based channel model in \eqref{general_model} and arrived at an  expression based on $Y$-parameter in \eqref{general}, in which the effect of RIS is explicitly visible through the admittance matrix $\bY_I$. The $Y$-parameter turns out to be the most suitable representation for the general physics-consistent model, as it yields the simplest analytical expression. 
 Since no constraints are imposed on $\bY_I$ in the above derivation, the channel model given in \eqref{general} does not rely on any specific characteristics of the reconfigurable impedance network. Hence, it also holds when the RIS is lossy or non-reciprocal. In the following subsection, we further derive a compact reformulation of \eqref{general} under the assumption that the RIS is lossless and reciprocal. 

\subsection{A Compact Form of the General Physics-Consistent Model}
By the assumption that the RIS is lossless and reciprocal, $\bY_I$ is a purely imaginary symmetric matrix, which can be expressed as $\bY_I = \mathrm{i} \bB_I$, where $\bB_I \in \mathbb{R}^{N_I \times N_I}$ is the susceptance matrix satisfying $\bB_I = \bB_I^T$.

The main idea is to apply a “diagonalization” technique to the matrix 
\begin{equation}\label{def:bar_YII}
\bar{\bY}_{II}:=\bY_{II} - \bY_{IR}(\bY_{RR} + \bY_{R})^{-1}\bY_{RI}
\end{equation} involved inside a matrix inversion in  \eqref{general}. This technique is inspired by \cite{Semmler,mutualcoupling2}, where the channel model with mutual coupling among RIS elements (i.e., $\bH_{\text{app},2}$) was simplified by appropriately diagonalizing $\bZ_{II}$. However, the matrix $\bar{\bY}_{II}$ in our case  is more complicated. Through a dedicated analysis, we obtain the following lemma.
\begin{lemma}\label{lem:pd}
The matrix 
$\Re(\bar{\mathbf{Y}}_{II})$ is positive definite, where $\bar{\bY}_{II}$ is defined in \eqref{def:bar_YII}.
\end{lemma}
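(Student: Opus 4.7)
The plan is to give $\bar{\mathbf{Y}}_{II}$ a physical interpretation as an effective admittance matrix and then leverage passivity of the underlying antenna network. I would introduce
\[
\bM := \begin{bmatrix} \mathbf{Y}_{II} & \mathbf{Y}_{IR} \\ \mathbf{Y}_{RI} & \mathbf{Y}_{RR} + \bY_R \end{bmatrix},
\]
which is the admittance matrix of the RIS-plus-receiver subnetwork with the receiver ports terminated by the load admittance $\bY_R$. By \eqref{def:bar_YII}, $\bar{\mathbf{Y}}_{II}$ is precisely the Schur complement of the bottom-right block of $\bM$.

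Two ingredients then drive the argument. First, $\Re(\bY) \succ 0$: the antenna-array network is passive with strictly positive radiation losses, and by reciprocity $\bY$ is complex symmetric so its entry-wise real part coincides with its hermitian part. Restricting to the principal submatrix indexed by the RIS and receiver ports preserves strict positive definiteness. Second, $\Re(\bY_R) \succeq 0$, since each receiver load is a passive one-port element. Adding the positive semidefinite augmentation $\mathrm{blkdiag}(\mathbf{0}, \Re(\bY_R))$ therefore preserves strict definiteness, yielding $\Re(\bM) \succ 0$; the matrix $\bM$ is also complex symmetric.

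The remaining step is a Schur-complement lemma: for a complex symmetric $\bM$ with $\Re(\bM) \succ 0$, the Schur complement $\bar{\mathbf{Y}}_{II}$ of its bottom-right block satisfies $\Re(\bar{\mathbf{Y}}_{II}) \succ 0$. The workhorse identity
\[
\bM^{-1} + \bM^{-H} = \bM^{-H}(\bM + \bM^{H})\bM^{-1}
\]
is a congruence transformation and thus preserves strict positive definiteness, so $\Re(\bM^{-1}) \succ 0$. Since the $(1,1)$ block of $\bM^{-1}$ equals $\bar{\mathbf{Y}}_{II}^{-1}$, its real part is a principal submatrix of $\Re(\bM^{-1})$ and hence positive definite. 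Applying the same identity in reverse to the complex symmetric matrix $\bar{\mathbf{Y}}_{II}^{-1}$ then delivers $\Re(\bar{\mathbf{Y}}_{II}) \succ 0$.

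The main obstacle I anticipate is justifying strict, rather than merely semidefinite, positive definiteness of $\Re(\bY)$: this requires nontrivial radiation resistance on every mode of the antenna system, a physically standard assumption but one that must be invoked explicitly. A secondary bookkeeping point is the interplay between the entry-wise real part (the paper's $\Re(\cdot) = \mathcal{R}(\cdot)$) and the hermitian part used in the congruence identity; reciprocity makes $\bY$, $\bM$, and $\bar{\mathbf{Y}}_{II}$ complex symmetric, so the two notions coincide and the Schur-complement argument goes through cleanly.
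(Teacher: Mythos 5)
Your proof is correct, but it takes a genuinely different route from the paper's. The paper works entirely at the level of real and imaginary parts: it expands $\Re(\bY_{IR}(\bY_{RR}+\bY_R)^{-1}\bY_{RI})$ into four real-matrix terms, compares it against $\Re(\bY_{IR})\,\Re(\bY_{RR}+\bY_R)^{-1}\Re(\bY_{IR})^T$ using the Schur-complement characterization of $\Re(\bY)\succ\mathbf{0}$ restricted to the $(I,R)$ ports, and then disposes of the leftover cross terms by verifying that a structured $2\times 2$ real block matrix is positive semidefinite via explicit formulas for $\Re(\bM^{-1})$ and $\Im(\bM^{-1})$ with $\bM=\bY_{RR}+\bY_R$ (the relevant Schur complement there turns out to be exactly zero). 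Your argument instead recognizes $\bar{\bY}_{II}$ as the Schur complement of the terminated $(I,R)$ admittance block, establishes that this block has positive definite Hermitian part (which equals the entry-wise real part by reciprocity/complex symmetry), and invokes the standard closure of the class of matrices with positive definite Hermitian part under inversion (via the congruence identity $\bM^{-1}+\bM^{-H}=\bM^{-H}(\bM+\bM^H)\bM^{-1}$) and under taking principal submatrices, hence under Schur complementation. Both proofs rest on the same external physical input, namely $\Re(\bY)\succ\mathbf{0}$ for a reciprocal lossy network and positivity of the load admittances, so neither is more demanding in its hypotheses; your version is shorter, avoids the explicit inversion formulas and the block-PSD verification, and generalizes immediately (it shows any Schur complement of any principal block of $\bY+\operatorname{blkdiag}(\mathbf{0},\bY_R)$ has positive definite real part), whereas the paper's computation is more self-contained and elementary. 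Two small points you flagged are indeed the only ones needing explicit care, and both go through: reciprocity makes $\bY$, your $\bM$, and $\bar{\bY}_{II}$ complex symmetric so the Hermitian and entry-wise real parts coincide throughout, and invertibility of $\bY_{RR}+\bY_R$ and of $\bar{\bY}_{II}$ (needed to identify the $(1,1)$ block of $\bM^{-1}$ with $\bar{\bY}_{II}^{-1}$) follows from the positive definiteness of the respective Hermitian parts. If you want to avoid inverting altogether, the quadratic-form identity $x^H\bar{\bY}_{II}x=[x^H\ y^H]\bM[x^T\ y^T]^T$ with $y=-(\bY_{RR}+\bY_R)^{-1}\bY_{RI}x$ gives the conclusion in one line.
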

\begin{proof}
See Appendix \ref{app:pd}.
\end{proof}
With Lemma \ref{lem:pd}, we are now ready to present the compact reformulation of \eqref{general}.
\begin{proposition}[A Compact Form of \eqref{general}]\label{pro:general}
The model in \eqref{general} can be rewritten as 
\begin{equation}\label{general_simplified}
\mathbf{H}=\bar{\mathbf{H}}_{RT}+\bar{\mathbf{H}}_{RI}\bar{\bthe}\bar{\mathbf{H}}_{IT}.
\end{equation}
In the above model, 
 \begin{equation}\label{bartheta}
\bar{\bthe}=(Y_0\mathbf{I}+\mathrm{i}\bar{\bB}_I)^{-1}(Y_0\mathbf{I}-\mathrm{i}\bar{\bB}_I),
\end{equation}
where $Y_0=1/Z_0$ is the reference admittance, $\bar{\bB}_I$ is a linear transformation of the susceptance matrix $\bB_I$ defined as 
\begin{equation}\label{barBI}
 \bar{\bB}_I=Y_0\Re(\bar{\bY}_{II})^{-\frac{1}{2}}\left(\bB_I+\Im(\bar{\bY}_{II})\right)\Re(\bar{\bY}_{II})^{-\frac{1}{2}}, 
 \end{equation}
and $\bar{\bH}_{RT}$, $\bar{\bH}_{RI}$, and $\bar{\bH}_{IT}$ are functions of $\bY$ and $\bY_R$, whose definitions are given  below: 
$$
\begin{aligned}
\bar{\mathbf{H}}_{RT}&=-\frac{1}{2Y_0}\left(\bar{\mathbf{Y}}_{RT}-\frac{1}{2Y_0}\bar{\mathbf{Y}}_{RI}\bar{\bY}_{IT}\right),\\
\bar{\mathbf{H}}_{RI}&=-\frac{\bar{\mathbf{Y}}_{RI}}{2Y_0},~~\bar{\mathbf{H}}_{IT}=-\frac{\bar{\bY}_{IT}}{2Y_0},
\end{aligned}$$
with
$$
\begin{aligned}
\bar{\bY}_{RT}&=2Y_0\left(\mathbf{Y}_R+\mathbf{Y}_{RR}\right)^{-1}\bY_{RT},\\
\bar{\bY}_{RI}&=\sqrt{2}Y_0\left(\mathbf{Y}_R+\mathbf{Y}_{RR}\right)^{-1}\bY_{RI}\Re(\bar{\bY}_{II})^{-\frac{1}{2}},\\
\bar{\bY}_{IT}&=\sqrt{2}{Y_0}\Re(\bar{\bY}_{II})^{-\frac{1}{2}}\left(\mathbf{Y}_{IT}{\color{black}\,-\mathbf{Y}_{IR}(\mathbf{Y}_R+\mathbf{Y}_{RR})^{-1}\mathbf{Y}_{RT}}\right).
\end{aligned}$$
\end{proposition}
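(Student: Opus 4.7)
The plan is to reduce \eqref{general} to \eqref{general_simplified} in three stages: a compact rewriting that isolates the sole $\bY_I$-dependent inverse, a real–symmetric diagonalization of the inner matrix made possible by Lemma \ref{lem:pd}, and a Cayley-transform identity that converts the remaining inverse into an affine expression in $\bar{\bthe}$.

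Using $\bY_I = \mathrm{i}\bB_I$, the definition of $\bar{\bY}_{II}$ in \eqref{def:bar_YII}, and the shorthand $\mathbf{K} := \bY_{IT} - \bY_{IR}(\bY_R+\bY_{RR})^{-1}\bY_{RT}$, equation \eqref{general} becomes
$$\bH = -(\bY_R+\bY_{RR})^{-1}\!\left(\bY_{RT} - \bY_{RI}\,(\mathrm{i}\bB_I + \bar{\bY}_{II})^{-1}\,\mathbf{K}\right).$$
Because $\Re(\bar{\bY}_{II}) \succ 0$ by Lemma \ref{lem:pd}, its principal square root $\bU := \Re(\bar{\bY}_{II})^{1/2}$ is real, symmetric, and invertible. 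Writing $\bar{\bY}_{II} = \bU^2 + \mathrm{i}\,\Im(\bar{\bY}_{II})$ and factoring $\bU$ on both sides gives
$$\mathrm{i}\bB_I + \bar{\bY}_{II} = \bU\!\left(\mathbf{I} + \mathrm{i}\,\bU^{-1}\!\left(\bB_I + \Im(\bar{\bY}_{II})\right)\bU^{-1}\right)\!\bU = \tfrac{1}{Y_0}\,\bU\,(Y_0\mathbf{I} + \mathrm{i}\bar{\bB}_I)\,\bU,$$
where the last equality is exactly the definition of $\bar{\bB}_I$ in \eqref{barBI}. Inversion yields $(\mathrm{i}\bB_I + \bar{\bY}_{II})^{-1} = Y_0\,\bU^{-1}(Y_0\mathbf{I} + \mathrm{i}\bar{\bB}_I)^{-1}\bU^{-1}$.

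For the Cayley step, since $\bar{\bB}_I$ is real symmetric, $Y_0\mathbf{I} - \mathrm{i}\bar{\bB}_I = 2Y_0\mathbf{I} - (Y_0\mathbf{I} + \mathrm{i}\bar{\bB}_I)$, so \eqref{bartheta} rearranges into $(Y_0\mathbf{I} + \mathrm{i}\bar{\bB}_I)^{-1} = (\mathbf{I} + \bar{\bthe})/(2Y_0)$, which plugged back gives $(\mathrm{i}\bB_I + \bar{\bY}_{II})^{-1} = \tfrac{1}{2}\bU^{-1}(\mathbf{I} + \bar{\bthe})\bU^{-1}$. Substituting into the displayed expression for $\bH$ splits it into a $\bar{\bthe}$-independent piece and a piece linear in $\bar{\bthe}$. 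The linear piece carries the prefactor $\tfrac{1}{2}(\bY_R+\bY_{RR})^{-1}\bY_{RI}\bU^{-1}$ and the postfactor $\bU^{-1}\mathbf{K}$; after absorbing the $\sqrt{2}\,Y_0$ normalizations present in $\bar{\bY}_{RI}, \bar{\bY}_{IT}$, these reassemble into $\bar{\bH}_{RI}\bar{\bthe}\bar{\bH}_{IT}$. The $\bar{\bthe}$-free remainder combines with $-(\bY_R+\bY_{RR})^{-1}\bY_{RT}$ and collapses to $\bar{\bH}_{RT}$ by direct comparison with its definition.

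The main obstacle is the second stage. The analogous diagonalization in \cite{Semmler,mutualcoupling2} acts on $\bZ_{II}$, whose real part is the radiation resistance and is manifestly positive definite on physical grounds; here, the object being diagonalized is the Schur complement $\bar{\bY}_{II}$, and positive definiteness of $\Re(\bar{\bY}_{II})$ is not apparent a priori. Securing this is exactly the role of Lemma \ref{lem:pd}, without which $\bU$ and the entire Cayley-transform manipulation would not even be well defined. Once $\bU$ is in hand, the remainder of the argument is careful bookkeeping of the constants $Y_0$, $\sqrt{2}$, and signs, dictated entirely by the definitions of $\bar{\bH}_{RT}, \bar{\bH}_{RI}, \bar{\bH}_{IT}$.
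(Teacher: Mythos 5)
Your proof is correct and follows essentially the same route as the paper's: diagonalize $\bY_I+\bar{\bY}_{II}$ via $\Re(\bar{\bY}_{II})^{-1/2}$ (justified by Lemma \ref{lem:pd}) to expose $(Y_0\mathbf{I}+\mathrm{i}\bar{\bB}_I)^{-1}$, then apply the Cayley identity $(Y_0\mathbf{I}+\mathrm{i}\bar{\bB}_I)^{-1}=\tfrac{1}{2Y_0}(\mathbf{I}+\bar{\bthe})$ and match constants. The paper phrases the last step as the relation $-\bar{\bY}_{RI}\bar{\bY}_{IT}=-2Y_0\bar{\bY}_{RI}(Y_0\mathbf{I}+\mathrm{i}\bar{\bB}_I)^{-1}\bar{\bY}_{IT}+\bar{\bY}_{RI}\bar{\bthe}\bar{\bY}_{IT}$, which is the same identity you use.
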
 
\begin{proof}
The proof is similar to \cite[Eq. (37)--(41)]{mutualcoupling2}.
First, by $\bY_I=\mathrm{i}\bB_I$ and the definition of $\bar{\bB}_I$ in  \eqref{barBI}, we have
$$(\bY_I+\bar{\bY}_{II})^{-1}=Y_0\Re(\bar{\bY}_{II})^{-\frac{1}{2}}(Y_0\mathbf{I}+\mathrm{i}\bar{\bB}_I)^{-1}\Re(\bar{\bY}_{II})^{-\frac{1}{2}};$$
the matrix $\Re(\bar{\bY}_{II})^{-\frac{1}{2}}$ is uniquely defined according to Lemma \ref{lem:pd}. 
Hence,  \eqref{general} can be expressed as 
\begin{equation}\label{general_2}
\mathbf{H}=\frac{1}{2Y_0}\left(-\bar{\mathbf{Y}}_{RT}+\bar{\mathbf{Y}}_{RI}\left(Y_0\mathbf{I}+\mathrm{i}\bar{\mathbf{B}}_I\right)^{-1}\bar{\mathbf{Y}}_{IT}\right).
\end{equation}
The desired result in Proposition \ref{pro:general} can then be obtained by noting the relation  
$$-\bar{\bY}_{RI}\bar{\bY}_{IT}=-2Y_0\bar{\bY}_{RI}(Y_0\mathbf{I}+\mathrm{i}\bar{\bB}_I)^{-1}\bar{\bY}_{IT}+\bar{\bY}_{RI}\bar{\bthe}\bar{\bY}_{IT}.$$
\end{proof}
Proposition \ref{pro:general} provides a significantly simplified representation of the model in \eqref{general}. Notably, a channel model with the same mathematical structure was previously derived in \cite{Semmler} to analyze the effect of a decoupling network at the RIS with mutual coupling under the unilateral approximation. The same mathematical structure has also appeared in \cite{mutualcoupling2}, where a tight upper bound on the channel gain achievable by BD-RIS has been derived under the unilateral approximation. In contrast to \cite{Semmler,mutualcoupling2}, the model in Proposition \ref{pro:general} does not rely on the unilateral approximation and is therefore more general, and does not assume the presence of a decoupling network.

The model in Proposition \ref{pro:general} shares the same form as \eqref{Happ3}, but the expression and physical meaning of each component are completely different in the two models. In \eqref{Happ3}, $\bH_{RT}$, $\bH_{RI}$, and $\bH_{IT}$ are the corresponding submatrices of the scattering matrix $\mathbf{S}$ of the wireless channel, and $\bthe$ is the scattering matrix of the RIS. 	In \eqref{general_simplified}, $\bar{\bH}_{RT}$, $\bar{\bH}_{RI}$, and $\bar{\bH}_{IT}$ are more intricate transformation of the admittance matrix $\bY$ and the load impedance $\bY_R$, and they do not exhibit an explicit relationship with the actual scattering matrix $\mathbf{S}$.
  The effect of RIS is fully captured by $\bar{\bthe}$, which, due to the relation in \eqref{bartheta}, can be viewed as the scattering matrix of a ``virtual'' $N_I$-port network. The corresponding virtual susceptance matrix, $\bar{\bB}_I$, is a linear transformation of the actual susceptance matrix of the RIS, as captured by \eqref{barBI}.
In fact, Proposition 1 implies that all existing channel models introduced in Section \ref{sec:model} can be  represented in the form of \eqref{general_simplified},  thereby extending the results in \cite{Semmler,mutualcoupling2}, while the specific expressions of the components $\bar{\bH}_{RT}$, $\bar{\bH}_{RI}$, $\bar{\bH}_{IT}$, and $\bar{\bthe}$ vary across models.

We note that the architecture design and optimization of RIS have been extensively studied under the conventional channel model $\bH_{\text{app},3}$. Building on the unified mathematical structure shared by the general physics-consistent model and $\bH_{\text{app},3}$, existing results and algorithms can be extended to the general physics-consistent model with appropriate modifications. The key difference is that,  to account for different RIS architectures,  certain architecture-specific constraints need to be imposed on the actual susceptance matrix  $\bB_I$, while the model in \eqref{general_simplified} is expressed in terms of a transformation of $\bB_I$, namely $\bar{\bB}_I$.  In the following two sections, we  identify the optimal RIS architecture and design optimization algorithms, respectively, under the general physics-consistent channel model.

 \section{Optimal RIS Architectures under the General Physics-Consistent Model}\label{sec:4}
In this section, we study architecture design of RIS under the general physics-consistent model. We first review  existing RIS architectures in Section \ref{sec:arch}, and then discusses the optimal RIS architecture under the general physics-consistent model in Section \ref{sec:optarch}.
\subsection{Existing RIS Architectures}\label{sec:arch}
As mentioned in Section \ref{sec:generalmodel}, the RIS is modeled as $N_I$ antennas connected to an $N_I$-port reconfigurable impedance network. Different connection types among the RIS elements lead to different architectures. It is convenient to characterize the RIS architecture using the admittance matrix $\bY_I$ of the reconfigurable impedance network, as a non-zero entry $[\bY_I]_{i,j}$ indicates a connection between the $i$-th and $j$-th RIS elements, and zero entry indicates no connection \cite{tree,graph}. In the following, we introduce exiting RIS architectures.


\subsubsection{Group-Connected RIS} 
In group-connected RIS, the RIS elements are uniformly divided into $G$ groups, where elements within each group are interconnected. In this case, the admittance matrix $\bY_I$ is a block diagonal matrix, expressed as $\bY_I=\text{blkdiag}(\bY_{I,1},\dots,\bY_{I,G})$, where $\bY_{I,g}\in\C^{({N_I}/{G})\times ({N_I}/{G})}$ for all $g=1,2,\dots, G$. Two extreme cases of the group-connected RIS are (i) single-connected RIS (conventional RIS), where $G=N_I$, i.e., there is no interconnection between RIS elements; and (ii) fully-connected RIS, where $G=1$, i.e., all RIS elements are inter-connected. 

\subsubsection{Tree-Connected RIS} Tree-connected RIS refers to the architecture whose graph representation, with RIS elements as vertices and interconnections as edges, is a tree graph \cite{tree}. A representative example of tree-connected RIS is the tridiagonal RIS, where each RIS element is only connected to its neighboring element. In this case, 
\begin{equation}\label{def:tri}
\bY=\left[\begin{matrix}
[\bY_I]_{1,1}&[\bY_I]_{1,2}&&\vspace{-0.1cm}\\
[\bY_I]_{1,2}&[\bY_I]_{2,2}&\ddots&\vspace{-0.0cm}\\
&\hspace{-1.3cm}\ddots&\hspace{-0.6cm}\ddots&\hspace{-0.3cm}[\bY_I]_{N_I-1,N_I}\vspace{0.1cm}\\
&&\hspace{-0.9cm}[\bY_I]_{N_I-1,N_I}&\hspace{-0.3cm}[\bY_I]_{N_I,N_I}
\end{matrix} \right],
\end{equation}
i.e., $[\bY_I]_{i,j}=0$ if $|j-i|>1$. 

\subsubsection{Generalized Band-Connected RIS} \label{generalized band}
Band-connected RIS refers to the architecture where each RIS element is connected to its $q$-nearest elements, where 
\begin{equation}\label{band}
\bY_I=\left[\hspace{0.1cm}\begin{matrix}
[\bY_I]_{1,1}&\hspace{-0.2cm}\cdots&\hspace{-0.6cm}[\bY_I]_{1,q+1}&\hspace{-0.2cm}&\hspace{-0.2cm}&\hspace{-0.5cm}\\
\vdots&\ddots&\hspace{-0.8cm}&\hspace{-1cm}\ddots&\hspace{-1cm}&\vspace{-0.1cm}\\
[\bY_I]_{1,q+1}&\hspace{-0.1cm}&\ddots\hspace{0.8cm}&&\hspace{-0.6cm}[\bY_I]_{N_I-q,N_I}\\
&\hspace{-0.1cm}\hspace{-1cm}\ddots&\hspace{1cm}\ddots&\hspace{-0.2cm}&\hspace{-0.6cm}\vdots\vspace{0.1cm}\\
&\hspace{-0.2cm}&\hspace{-1cm}[\bY_I]_{N_I-q,N_I}&\hspace{-1cm}\cdots\hspace{-0.2cm}&\hspace{-0.3cm}[\bY_I]_{N_I,N_I}
\end{matrix}\hspace{-0.2cm}\right],
\end{equation}
 i.e., $[\bY_I]_{i,j}=0$ if $|j-i|>q$. The band-connected RIS can be generalized such that each vertex is connected to any $q$ of its subsequent vertices (not necessarily adjacent ones). This class can be extended further to include all architectures whose admittance matrices, upon left- and right-multiplication by permutation matrices, have the aforementioned structure. Please see \cite{graph} for details.   When $q=1$, the generalized band-connected RIS reduces to tree-connected RIS. In particular, the band-connected RIS in \eqref{band} reduces to tridiagonal RIS \eqref{def:tri}. 

Existing works have investigated architecture design for both MISO and MIMO systems under the conventional channel model $\bH_{\text{app},3}$. It has been proved that the generalized band-connected RIS with $q=2\min\{N_T,N_R,N_I/2\}-1$ is able to achieve the same channel shaping capability as fully-connected RIS in MIMO systems \cite{graph}.  In particular, tree-connected RIS is optimal in MISO systems \cite{tree}. 
Building on Proposition 1, we will show in the next subsection that the optimality of these architectures still hold under the general physics-consistent model.
\subsection{Optimal RIS Architecture under the General Physics-Consistent Model}\label{sec:optarch}
The channel shaping capability of a given RIS architecture is characterized by the set of channels attainable through tuning the admittances in its reconfigurable impedance network \cite{graph}. In this subsection, we characterize the optimality of band-connected RIS\footnote{For ease of presentation, we focus on the band-connected RIS in \eqref{band} in this subsection. The result extends to all generalized band-connected RIS discussed in Section \ref{generalized band}.} 
 under the general physics-consistent model. 

Let $\mathcal{H}_{\text{fully}}$ and $\mathcal{H}_{\text{band},q}$ be the channel sets achieved by fully-connected RIS and band-connected RIS with band-width $q$, respectively. 
Based on Proposition \ref{pro:general}, $\mathcal{H}_{\text{fully}}$ and $\mathcal{H}_{\text{band},q}$ can be expressed as 
$$
\begin{aligned}\mathcal{H}_{\text{fully}}=&\left\{\eqref{general_simplified}\mid(\bar{\bthe}, \bar{\bB}_I, \bB_I) \text{ satisfy  \eqref{bartheta} and \eqref{barBI}},\right.\\
&\hspace{3cm}\left.~ \bB_I\in\R^{N_I\times N_I},~~\bB_I=\bB_I^T\right\}
\end{aligned}$$
and 
$$
\begin{aligned}
\mathcal{H}_{\text{band},q}&=\left\{\eqref{general_simplified}\mid(\bar{\bthe}, \bar{\bB}_I, \bB_I) \text{ satisfy \eqref{bartheta} and  \eqref{barBI}},\right.\\
&\hspace{-0.5cm}\left.\bB_I\in\R^{N_I\times N_I},~\bB_I=\bB_I^T,~[\bB_I]_{i,j}=0\text{ if }|j-i|>q\right\}.
\end{aligned}$$
The following proposition establishes the optimality of band-connected RIS in terms of channel shaping capacity under the general physics-consistent model. 
\begin{proposition}\label{pro:arch} 
Let $D=\min\{N_R,N_T\}$, which is the multiplexing gain/degree of freedom (DoF)  of the MIMO channel. The following result holds for $q=2\min\{D, N_I/2\}-1$: 
$$\mathcal{H}_{\text{fully}}=\mathcal{H}_{\text{band},q}\cup\mathcal{N}, $$
where $\mathcal{N}$ is a low-dimensional subspace of  $\mathcal{H}_{\text{fully}}$ defined in \eqref{NU}.
\end{proposition}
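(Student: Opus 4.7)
The plan is to leverage Proposition \ref{pro:general}, which recasts the general physics-consistent model into the same structural form $\bH = \bar{\bH}_{RT} + \bar{\bH}_{RI}\bar{\bthe}\bar{\bH}_{IT}$ as the conventional model \eqref{Happ3}, and then to adapt the channel-shaping argument of \cite{graph} to this compact form. By Lemma \ref{lem:pd}, the affine map $\bB_I \mapsto \bar{\bB}_I$ in \eqref{barBI} is a bijection on real symmetric matrices, while the Cayley transform \eqref{bartheta} is a bijection from real symmetric matrices onto the set of symmetric unitary matrices that do not have $-1$ as an eigenvalue. Composing these, I would first recast $\mathcal{H}_{\text{fully}}$ as the image of the affine map $\bar{\bthe} \mapsto \bar{\bH}_{RT} + \bar{\bH}_{RI}\bar{\bthe}\bar{\bH}_{IT}$ over all symmetric unitary $\bar{\bthe}$, and $\mathcal{H}_{\text{band},q}$ as the image over the subset of such $\bar{\bthe}$ whose pre-image $\bB_I$ under \eqref{bartheta}--\eqref{barBI} is band-structured with band-width $q$.

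The core technical step is to show that, for every symmetric unitary $\bar{\bthe}^{\star}$ whose induced channel does not lie in the exceptional set $\mathcal{N}$, there exists a band-structured $\bB_I$ such that the resulting $\bar{\bthe}$ satisfies $\bar{\bH}_{RI}\bar{\bthe}\bar{\bH}_{IT} = \bar{\bH}_{RI}\bar{\bthe}^{\star}\bar{\bH}_{IT}$. This mirrors the SVD-based construction of \cite{graph}: since $\bar{\bH}_{RI}\bar{\bthe}\bar{\bH}_{IT}$ has rank at most $D=\min\{N_R,N_T\}$, only a $D \times D$ \emph{effective} block of $\bar{\bthe}$ -- its projection onto the row- and column-subspaces identified by the SVDs of $\bar{\bH}_{IT}$ and $\bar{\bH}_{RI}$ -- actually enters the channel. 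A band-width $q = 2\min\{D, N_I/2\}-1$ supplies roughly $2DN_I - D(2D-1)$ free real parameters in $\bB_I$, which is sufficient to realize any target effective block under the symmetric-unitary constraint on $\bar{\bthe}$.

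The main obstacle is that the affine bijection \eqref{barBI} does not preserve band structure: a band-structured $\bB_I$ maps to a $\bar{\bB}_I$ that is generally dense across all entries, so the result of \cite{graph}, which is phrased in terms of band-structured $\bar{\bB}_I$, cannot be invoked as a black box. To overcome this, I would carry out the SVD-based construction directly in the $\bB_I$ coordinates, propagating the invertible factor $\bP := \sqrt{Y_0}\,\Re(\bar{\bY}_{II})^{-1/2}$ through the Cayley transform. Because $\bP$ is invertible, the pushforward of the band-structured $\bB_I$ subspace is an affine subspace of $\bar{\bB}_I$ of the same dimension, and this subspace remains generic enough that, after the Cayley transform and the rank-$D$ projection, it surjects onto the effective-block space. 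The exceptional subset $\mathcal{N}$ then collects precisely the channels whose $\bar{\bthe}$ has $-1$ as an eigenvalue along directions for which the inverse Cayley transform forces $\bar{\bB}_I$, and hence $\bB_I$, to diverge; this forms a lower-dimensional algebraic subset of $\mathcal{H}_{\text{fully}}$.
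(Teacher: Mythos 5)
Your high-level strategy is the same as the paper's: exploit the compact form of Proposition \ref{pro:general}, recognize that the map \eqref{barBI} destroys band structure so that \cite{graph} cannot be used as a black box, and therefore work directly in the $\bB_I$ coordinates. However, the core technical step is not actually carried out. You assert that the pushforward of the banded-$\bB_I$ subspace ``remains generic enough that, after the Cayley transform and the rank-$D$ projection, it surjects onto the effective-block space,'' supported only by a count of free parameters. Matching dimensions is necessary but not sufficient for surjectivity of a nonlinear map onto a constrained set, so this is a genuine gap. The paper closes it differently and concretely: it parametrizes the channel by $\bU=(\bar{\bH}_{RI}\bar{\bthe})^H$, shows $\mathcal{U}_{\text{fully}}\subseteq\mathcal{U}$, and then reduces membership of a given $\bU\in\mathcal{U}$ in $\mathcal{U}_{\text{band}}$ to the solvability of a \emph{linear} system $\bB_I\bar{\bM}_\bU=\bar{\boldsymbol{\Gamma}}_\bU$ over symmetric banded $\bB_I$, where the factors $\Re(\bar{\bY}_{II})^{\pm\frac{1}{2}}$ and $\Im(\bar{\bY}_{II})$ from \eqref{barBI} are absorbed into $\bar{\bM}_\bU$ and $\bar{\boldsymbol{\Gamma}}_\bU$. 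Solvability then follows from the structural result of \cite{graph} (Lemma \ref{lemma:solution}), whose two hypotheses must be checked: symmetry of $\bar{\bM}_\bU^T\bar{\boldsymbol{\Gamma}}_\bU$ (which survives the conjugation because the extra term has the form $\bM_\bU^T\mathbf{S}\bM_\bU$ with $\mathbf{S}$ symmetric, using the symmetry of $\bar{\bY}_{II}$) and the absence of singular submatrices in $\bar{\bM}_\bU$. None of this verification appears in your argument.

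A second, related error is your description of the exceptional set. $\mathcal{N}$ is \emph{not} the locus where $\bar{\bthe}$ has $-1$ as an eigenvalue or where the inverse Cayley transform diverges; channels of that type are already outside $\mathcal{H}_{\text{fully}}$, since $\mathcal{H}_{\text{fully}}$ is itself defined through \eqref{bartheta} with finite real symmetric $\bB_I$. In the paper, $\mathcal{N}$ is the image (via \eqref{N}) of the set $\mathcal{N}_\mathcal{U}$ in \eqref{NU} of those $\bU$ for which $\bar{\bM}_\bU$ has a singular submatrix, i.e., exactly the failure locus of the hypothesis of Lemma \ref{lemma:solution}; its low dimensionality is inherited from \cite[Theorem 2]{graph}. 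To repair your proof you would need to replace the genericity claim with the linear-system reduction and the two explicit verifications above, and redefine $\mathcal{N}$ accordingly.
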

\begin{proof}
The proof is similar to that of \cite[Theorem 2]{graph}. See Appendix \ref{app:proarch} for details. 
\end{proof}

Proposition  \ref{pro:arch} shows that the difference between $\mathcal{H}_{\text{fully}}$ and $\mathcal{H}_{\text{band},q}$, i.e., the achievable channel sets of fully-connected and band-connected RIS, lies only in a low-dimensional subspace of $\mathcal{H}_{\text{fully}}$, which is negligible. Intuitively, this indicates that band-connected RIS achieves the same channel shaping capability as fully-connected RIS in MIMO systems, except for a very limited number of channel realizations, which will practically never occur.  The circuit complexity, measured by the number of required admittances, in the optimal  band-connected RIS is $\mathcal{O}(N_ID)$, which is significantly lower than the $\mathcal{O}(N_I^2)$ complexity of fully-connected RIS (given that $N_I\gg D$) \cite{graph}. As a special case of Proposition \ref{pro:arch} with $N_R=1$, tree-connected RIS is optimal for MISO systems. 

A similar result to Proposition \ref{pro:arch} has been established in \cite{graph} under the approximate channel model $\bH_{\text{app,3}}$. 
In addition,  tree-connected RIS has been proved optimal for MISO systems while accounting for mutual coupling among RIS elements, i.e., under $\bH_{\text{app,2}}$, in \cite{mutualcoupling2}. Our result in Proposition \ref{pro:arch} unifies and generalizes these results by establishing the optimality of band-connected RIS for MIMO systems under the general physics-consistent channel model, which accounts for  imperfect matching, mutual coupling, and does not rely on the unilateral approximation.

\begin{figure}
\includegraphics[scale=0.33]{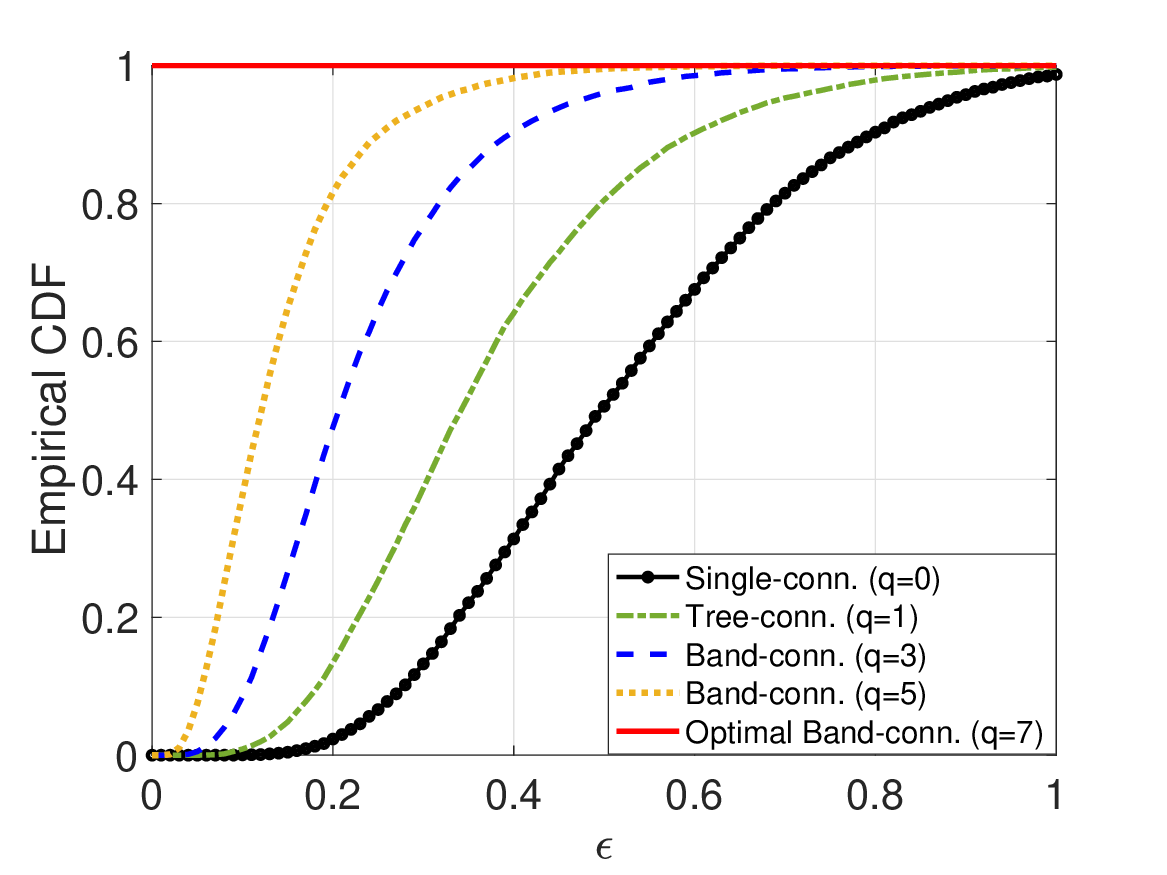}
\centering
\caption{The curves of empirical culmultive distribution function (CDF) of relative approximation error defined in \eqref{cdf}  for band-connected RIS with different band width $q$, where $N_T=N_R=4$.}
\label{J_optarch}
\end{figure} In Fig. \ref{J_optarch}, we validate Proposition \ref{pro:arch} through numerical simulations. We randomly generate $N_{\text{sample}}=10^4$ channel realizations for fully-connected RIS, where the impedance matrix $\bZ$, the source impedance  $\bZ_T$, and the load impedance $\bZ_R$ are modeled as in Section \ref{sec:simulation}. Each channel realization $\bH_{\text{fully}}^{(i)}\in\mathcal{H}_{\text{fully}}$ is obtained by randomly generating the susceptance matrix $\bB_I\in\R^{N_I\times N_I}$.   For each $\bH_{\text{fully}}^{(i)}$, we approximate it by band-connected RIS as follows\footnote{
In our simulation, we do not solve  \eqref{approx:Hfully} exactly, as it is a non-convex problem and does not admit an explicit solution. For computational tractability, we instead solve the following approximate problem:
$$
\begin{aligned}
\min_{(\bar{\bB}_I,\bB_I)}~&\|(Y_0\mathbf{I}-\mathrm{i}\bar{\bB}_I)\bar{\bH}_{IT}-(Y_0\mathbf{I}+\mathrm{i}\bar{\bB}_I)\bar{\bthe}_{\text{fully}}\bar{\bH}_{IT}\|_F^2\\
\text{s.t. }~~~& \eqref{barBI},~\bB_I=\bB_I^T,~[\bB_I]_{i,j}=0\text{ if }|j-i|>q,
\end{aligned}
$$
where we have expressed $\bH_{\text{fully}}^{(i)}$ as $\bH_{\text{fully}}^{(i)}=\bar{\bH}_{RT}+\bar{\bH}_{RI}\bar{\bthe}_{\text{fully}}\bar{\bH}_{IT}$. 
The above problem can be transformed into an unconstrained convex quadratic problem and admits closed-form solutions; see  \eqref{updateB2} and the subsequent discussions.}
:
\begin{equation}\label{approx:Hfully}
\bH_{\text{band},q}^{(i)}=\arg\min_{\bH\in\mathcal{H}_{\text{band,q}}}\|\bH-\bH_{\text{fully}}^{(i)}\|_F^2.
\end{equation}
 We quantify the approximation accuracy using the empirical cumulative distribution function (CDF) of the relative error, defined as 
 \begin{equation}\label{cdf}
F_q(\epsilon)=\frac{1}{N_{\text{sample}}}\sum_{i=1}^{N_{\text{sample}}} \mathbf{1}\left({\|\bH_{\text{band},q}^{(i)}-\bH_{\text{fully}}^{(i)}\|_F}/{\|\bH_{\text{fully}}^{(i)}\|_F}\leq \epsilon\right),
\end{equation}
where $\mathbf{1}(\cdot)$ denotes the indicator function, equal to $1$ if the condition holds and $0$ otherwise. 
Fig. \ref{J_optarch} depicts the curves of $F_{q}(\epsilon)$ for different $q$ values. 
As expected, $F_q(\cdot)$ increases with $q$.  In particular, when $q$ takes the optimal value from Proposition \ref{pro:arch}, $F_q(\epsilon)=1$ for all $\epsilon\geq 0$. This demonstrates that, with probability one, any channel achievable by a fully-connected RIS can be exactly reproduced using a band-connected RIS (with $q$ given in Proposition \ref{pro:arch}), thereby validating Proposition \ref{pro:arch}.

Before concluding this section, we remark that if the MIMO channel transmits below its full DoF, the  interconnections in the optimal architecture can further be reduced.  Specifically, let $\hat{D}$ be the number of transmit data streams, the band width in the optimal band-connected RIS reduces to $q=2\min\{\hat{D},N_I/2\}-1$; see \cite[Remark 5]{graph}.
\section{RIS Optimization under the General Physics-Consistent Model}\label{sec:opt}
In this section, we discuss optimization of RIS for different systems under the general physics-consistent model in \eqref{general}.  We begin with two simple cases: SISO systems and single-user MIMO systems with a single stream in Section \ref{sec:optimization_SISO} and Section \ref{sec:optimization_MIMO}, respectively. For both cases, we show that global optimal solutions can be obtained with fully/tree-connected RIS. 
 We then proceed to the more general case of multiuser MIMO systems in Section  \ref{sec:optimization_MUMIMO}.

\vspace{-0.1cm}
\subsection{Optimization of SISO Systems}\label{sec:optimization_SISO}
The channel gain maximization  for RIS-aided SISO systems can be formulated as 
\begin{equation}\label{SISO:problem}
\begin{aligned}
\max_{\bar{\bthe},\bar{\bB}_I,\bB_I}~&|\bar{\mathbf{h}}_{RT}+\bar{\mathbf{h}}_{RI}\bar{\bthe}\bar{\mathbf{h}}_{IT}|^2\\
\text{s.t. }\quad&\text{\eqref{bartheta} and \eqref{barBI}},~\bB_I\in\mathcal{B},
\end{aligned}
\end{equation}
where the set $\mathcal{B}$ captures the architecture of RIS; in particular,  $\mathcal{B}=\{\bB\in\R^{N_I\times N_I}\mid\bB=\bB^T\}$ for fully-connected RIS, and $\mathcal{B}=\{\bB\in\R^{N_I\times N_I}\mid\bB=\bB^T,~[\bB]_{i,j}=0,~|j-i|>1\}$ for tree-connected RIS (here we consider tridiagonal RIS as an example). The above problem is in the same form as \cite[Eqs. (42)\,--\,(43)]{mutualcoupling2}, which admits closed-form solutions 
for both fully- and tree-connected RIS, as shown in \cite{mutualcoupling2}.
\vspace{-0.1cm}
\subsection{Optimization of Single-Stream MIMO Systems}\label{sec:optimization_MIMO}
Consider a single-user MIMO system with a single stream. Let $\mathbf{w}\in\C^{N_T\times 1}$ and $\mathbf{g}^H\in\C^{1\times N_R}$ be the normalized  precoder and combiner, respectively, i.e., $\|\mathbf{w}\|_2=\|\mathbf{g}\|_2=1$. We consider the following receive power maximization problem: 
\begin{equation}\label{MIMO:problem}
\begin{aligned}
\max_{\bar{\bthe},\bar{\bB}_I,\bB_I,\mathbf{w},\mathbf{g}}~&P_T\left|\mathbf{g}^H\left(\bar{\mathbf{H}}_{RT}+\bar{\mathbf{H}}_{RI}\bar{\bthe}\bar{\mathbf{H}}_{IT}\right)\mathbf{w}\right|^2\\
\text{s.t. }\quad~~&\text{\eqref{bartheta} and \eqref{barBI}},~\bB_I\in\mathcal{B},\\
&\|\bw\|_2=1,~\|\mathbf{g}\|_2=1,
\end{aligned}
\end{equation}
where $P_T$ is the transmit power at the BS. 
The above problem has been considered in \cite{closeform} under the conventional channel model $\bH_{\text{app,3}}$, 
where two different cases have been investigated. If ${\bH}_{RT}=\mathbf{0}$ in $\bH_{\text{app,3}}$, i.e., the direct link is blocked, the  optimal solution can be obtained in closed-form. Otherwise, an iterative algorithm has been proposed to alternately update $(\mathbf{w},\mathbf{g})$ and $\bthe$ until convergence; however, the resulting solution is not guaranteed to be globally optimal.

For the general  physics-consistent channel model given by Proposition \ref{pro:general}, $\bar{\bH}_{RT}$ has a more complicated expression and is generally nonzero. As a result, closed-form solutions are generally unavailable. Nevertheless, we develop in the following a novel algorithm that is able to attain the  global optimal solution of \eqref{MIMO:problem} (for fully- and tree-connected RIS).  This is highly nontrivial due to the nonconvex nature of the problem, in particular the presence of complicated constraints and the multiplicative coupling of the variables in the objective function. As discussed earlier, the existing approach in \cite{closeform} cannot provide such a global optimality guarantee.  

  
Specifically, we first  eliminate variable $
\mathbf{g}$ by substituting the optimal $\mathbf{g}$ into the objective function of \eqref{MIMO:problem}.  By the Cauchy–Schwarz inequality, the objective function of \eqref{MIMO:problem} is maximized with respect to  $\mathbf{g}$ when $\mathbf{g}$  is aligned with $\left(\bar{\mathbf{H}}_{RT}+\bar{\mathbf{H}}_{RI}\bar{\bthe}\bar{\mathbf{H}}_{IT}\right)\mathbf{w}$, i.e.,
$$\mathbf{g}^*(\bthe,\bw)=\frac{\left(\bar{\mathbf{H}}_{RT}+\bar{\mathbf{H}}_{RI}\bar{\bthe}\bar{\mathbf{H}}_{IT}\right)\mathbf{w}}{\|\left(\bar{\mathbf{H}}_{RT}+\bar{\mathbf{H}}_{RI}\bar{\bthe}\bar{\mathbf{H}}_{IT}\right)\mathbf{w}\|_2}.$$ Substituting $\mathbf{g}^*(\bthe,\bw)$  into the objective function of \eqref{MIMO:problem}  yields the following equivalent problem:  
\begin{equation}\label{MIMO:problem2}
\begin{aligned}
\max_{\bar{\bthe},\bar{\bB}_I,\bB_I, \bw}~&\|\left(\bar{\mathbf{H}}_{RT}+\bar{\mathbf{H}}_{RI}\bar{\bthe}\bar{\mathbf{H}}_{IT}\right)\bw\|_2^2\\
\text{s.t. }\quad&\text{\eqref{bartheta} and \eqref{barBI}},~\bB_I\in\mathcal{B},\\
&\|\bw\|_2=1.
\end{aligned}
\end{equation}
The constant $P_T$ does not affect the optimal solution and is thus omitted for brevity. 
To further handle the multiplicative coupling between the variables $\bar{\bthe}$ and $\bw$ in the objective function, we introduce an auxiliary variable $\mathbf{u}=\bar{\bthe}\bar{\mathbf{H}}_{IT}\bw$, then problem \eqref{MIMO:problem2} transforms to
\begin{equation}\label{MIMO:problem3}
\begin{aligned}
\max_{\bar{\bthe},\bar{\bB}_I,\bB_I, \u,\mathbf{w}}~&\|\bar{\mathbf{H}}_{RT}\mathbf{w}+\bar{\mathbf{H}}_{RI}\mathbf{u}\|^2_2\\
\text{s.t. }\qquad&\text{\eqref{bartheta} and \eqref{barBI}},~\bB_I\in\mathcal{B},\\
&\mathbf{u}=\bar{\bthe}\bar{\mathbf{H}}_{IT}\bw,~~\|\bw\|_2=1.
\end{aligned}
\end{equation}
Our approach for solving \eqref{MIMO:problem3} contains two steps. First, we solve the following relaxation model of \eqref{MIMO:problem3}:
\begin{equation}\label{MIMO:problem4}
 \begin{aligned}
\max_{\mathbf{u},\bw}~&\|\bar{\mathbf{H}}_{RT}\mathbf{w}+\bar{\mathbf{H}}_{RI}\mathbf{u}\|^2_2\\
\text{s.t. }~&\|\mathbf{u}\|_2=\|\bar{\mathbf{H}}_{IT}\bw\|_2,~\|\bw\|_2=1.
\end{aligned}
\end{equation}
In \eqref{MIMO:problem4}, we relax all constraints in \eqref{MIMO:problem3} related to $(\bar{\bthe},\bar{\bB}_I, {\bB}_I)$, and introduce the constraint $\|\mathbf{u}\|_2=\|\bar{\mathbf{H}}_{IT}\bw\|_2$, which holds since $\bar{\bthe}$ is unitary.  The optimal value of \eqref{MIMO:problem4}, denoted by $P_u$, serves as an upperbound of that of \eqref{MIMO:problem3}, denoted by $P^*$, i.e., $P^*\leq P_u$. Let $(\mathbf{u}^*,\bw^*)$ be the optimal solution to \eqref{MIMO:problem4}. The second step is to recover $(\bar{\bthe},\bar{\bB}_I,\bB_I)$ by taking back related constraints and solving the following system:
\begin{equation}\label{solve:barBI}
\left\{
\begin{aligned}
&\mathbf{u}^*=\bar{\bthe}\bar{\mathbf{H}}_{IT}\bw^*,\\
&\text{\eqref{bartheta} and  \eqref{barBI}},~\bB_I\in\mathcal{B}.
\end{aligned}\right.
\end{equation}
Since $\|\mathbf{u}^*\|_2=\|\bar{\mathbf{H}}_{IT}\bw^*\|_2$, \eqref{solve:barBI} admits a unique solution $(\bar{\bthe}^*,\bar{\bB}_I^*,\bB_I^*)$ for tree-connected RIS; see \cite[Eqs. (45)\,--\,(51)]{mutualcoupling2}.  Therefore, $(\bar{\bthe}^*, \bar{\bB}_I^*, \bB_I^*, \mathbf{u}^*, \bw^*)$ is feasible for problem \eqref{MIMO:problem} and achieves the upperbound objective value $P_u$, which is thus an optimal solution to \eqref{MIMO:problem}.  


Based on the above discussions, the remaining task is to solve \eqref{MIMO:problem4}. We employ the SDR approach. Specifically, let $$\begin{aligned}
\mathbf{Q}_0&=\left[\begin{matrix} \bar{\bH}_{RT}^H\bar{\bH}_{RT}&\bar{\bH}_{RT}^H\bar{\bH}_{RI}\\\bar{\bH}_{RI}^H\bar{\bH}_{RT}&\bar{\bH}_{RI}^H\bar{\bH}_{RI}\end{matrix}\right],\\
\mathbf{Q}_1&=\left[\begin{matrix}\bar{\bH}_{IT}^H\bar{\bH}_{IT}&\\&-\mathbf{I}_{N_I}\end{matrix}\right],~\mathbf{Q}_2=\left[\begin{matrix}\mathbf{I}_{N_T}&\\&\mathbf{0}\end{matrix}\right], 
\end{aligned}$$
$\x=[\bw^T,\mathbf{u}^T]^T\in\C^{(N_T+N_I)\times 1}$,  and $\bX=\x\x^H$. 
The semidefinite relaxation of \eqref{MIMO:problem4} is
\begin{equation}\label{SDP}
\begin{aligned}
\max_{\bX\succeq \mathbf{0}}~~&\text{tr}(\mathbf{Q}_0\bX)\\
\text{s.t.}~~~&\text{tr}(\mathbf{Q}_1\bX)=0,~\text{tr}(\mathbf{Q}_2\bX)=1,
\end{aligned}
\end{equation}
where the non-convex rank-one constraint $\text{rank}(\bX)=1$ is relaxed.  A well-known result  for a complex SDP of the form \eqref{SDP} is that, there exists an optimal solution $\mathbf{X}^\star$ with $\operatorname{rank}(\mathbf{X}^\star) \le \sqrt{m}$ (as long as the problem is feasible), where $m$ denotes the number of linear constraints  \cite{Huang2010SDP,SDR}.  Since $m=2$ in \eqref{SDP}, there exists a rank-one optimal solution, and thus the SDR is tight. By decomposing $\bX^*=\x^*{\x^*}^H$, we get the optimal solution to \eqref{MIMO:problem4}.

\begin{remark}[A Low-Dimensional SDR]
Solving the ($N_T+N_I$)-dimensional semidefinite program (SDP) in \eqref{SDP} is computationally expensive, as the number of RIS elements is typically large. This remark will show that, by carefully exploiting the problem structure, it suffices to solve an SDP with a much lower dimension of $N_T+N_R$, where $N_R \ll N_I$ in practice.

Let $\bar{\bH}_{RI}=\bU_{RI}\bD_{RI}\mathbf{V}_{RI}^H$ be the singular value decomposition of $\bar{\bH}_{RI}$, where the singular values are sorted in descending order. Then,  at most the first $N_R$ diagonal enetries in $\bD_{RI}$ are non-zero. 
Introducing an auxiliary variable $\hat{\mathbf{u}}=\mathbf{V}_{RI}^H\mathbf{u}\in\C^{N_I\times 1}$, problem \eqref{MIMO:problem4} transforms to 
\begin{equation}\label{MIMO:problem5}
 \begin{aligned}
\max_{\hat{\mathbf{u}},\bw}~&\|\bU_{RI}^H\bar{\mathbf{H}}_{RT}\mathbf{w}+{\mathbf{D}}_{RI}\hat{\mathbf{u}}\|_2^2\\
\text{s.t. }~&\|\hat{\mathbf{u}}\|_2=\|\bar{\mathbf{H}}_{IT}\bw\|_2,~\|\bw\|_2=1.
\end{aligned}
\end{equation}
We claim that the optimal solution to \eqref{MIMO:problem5} satisfies 
\begin{equation}\label{hatustar}
[\hat{\mathbf{u}}^*]_i=0,~\forall~i>N_R.
\end{equation}
To prove this claim, we note that $\mathbf{D}_{RI} \hat{\mathbf{u}}$ is nonzero only in its first $N_R$ entries. In addition, the constraint on $\hat{\mathbf{u}}$ is independent of its phase. Therefore, to maximize the objective function, the phase of each $[\hat{\mathbf{u}}]_i$ should be adjusted such that the phases of $[\mathbf{U}_{RI}^H \bar{\mathbf{H}}_{RT} \mathbf{w}]_i$ and $[\mathbf{D}_{RI}]_{i,i} [\hat{\mathbf{u}}]_i$ are aligned. Under this phase alignment, the objective function becomes
$$\|\bU_{RI}^H\bar{\bH}_{RT}\bw\|_2^2+\sum_{i=1}^{N_R}|[\bD_{RI}]_{i,i}[\hat{\mathbf{u}}]_i|^2,$$
which is strictly increasing in $|[\hat{\mathbf{u}}]_i|$ for $i \leq N_R$ and is independent of $[\hat{\mathbf{u}}]_i$ for $i > N_R$. Hence, \eqref{hatustar} holds; otherwise, one could achieve a larger objective value by increasing the magnitudes of the first $N_R$ elements of $\hat{\mathbf{u}}$ and setting its last $N_I-N_R$ elements to zero.

With \eqref{hatustar}, we only need to optimize the first $N_R$ entries of $\hat{\mathbf{u}}$. Let $\bar{\mathbf{u}}=[\hat{\mathbf{u}}]_{1:N_R}$ and $\bar{\bD}_{RI}=[\bD_{RI}]_{:,1:N_R}.$  Utilizing \eqref{hatustar}, problem \eqref{MIMO:problem5} reduces to the following:
\begin{equation}\label{MIMO:problem6}
 \begin{aligned}
\max_{\bar{\mathbf{u}},\bw}~&\|\bU_{RI}^H\bar{\mathbf{H}}_{RT}\mathbf{w}+\bar{\mathbf{D}}_{RI}\bar{\mathbf{u}}\|_2^2\\
\text{s.t. }~&\|\bar{\mathbf{u}}\|_2=\|\bar{\mathbf{H}}_{IT}\bw\|_2,~\|\bw\|_2=1.
\end{aligned}
\end{equation}
The above problem can still be solved via SDR, and the resulting SDP has a dimension of $N_T + N_R$, which is significantly lower than that of \eqref{SDP}.
\end{remark}
\subsection{Optimization of Multiuser MIMO Systems}\label{sec:optimization_MUMIMO}
Finally, we discuss sum-rate maximization for  multiuser MIMO systems. For ease of presentation, we assume that each user has a single antenna. In this case, $N_R$ is the number of users in the system. The sum-rate maximization problem can be formulated as 
\begin{equation}\label{multiuser}
\begin{aligned}
\max_{\mathbf{W},\bar{\bthe},\bar{\bB}_I,\bB_I}~&\sum_{k=1}^{N_R}\log\left(1+\frac{|\bh_k(\bar{\bthe})^H\bw_k|^2}{\sum_{j\neq k}|\bh_k(\bar{\bthe})^H\bw_j|^2+\sigma^2}\right)\\
\text{s.t.}~~~~~~ &\text{\eqref{bartheta} and \eqref{barBI}},~\bB_I\in\mathcal{B},\\
&\|\mathbf{W}\|_F^2\leq P_T,
\end{aligned}
\end{equation}
where $\h_k(\bar{\bthe})^H$ is the $k$-th row of $\bH(\bar{\bthe})=\bar{\mathbf{H}}_{RT}+\bar{\mathbf{H}}_{RI}\bar{\bthe}\bar{\mathbf{H}}_{IT}$. 

Various algorithms have been proposed to solve the sum-rate maximization problem under the conventional channel model. The new challenge introduced by the general physics-consistent model lies in that the virtual scattering matrix $\bar{\bthe}$ involved in the channel model is related to the actual susceptance matrix $\bB_I$ through a more intricate relationship, given by \eqref{bartheta} and \eqref{barBI}. For fully-connected RIS, this constraint reduces to the unitary and symmetric constraints on $\bar{\bthe}$, and thus existing approaches remain directly applicable. However, for more general architectures,  \eqref{bartheta} and \eqref{barBI} must be explicitly taken into account. In particular, for group-connected RIS, the block-diagonal structure of the susceptance matrix $\bB_I$ does not generally translate into a block-diagonal structure of $\bar{\bthe}$, making existing approaches that rely on such structural properties not directly applicable.

Among the existing methods, our recent work in \cite{wu} proposed an ADMM-based optimization framework that is applicable to arbitrary RIS architectures, which was shown to achieve a favorable trade-off between computational complexity and performance. In the following, we generalize this algorithmic framework to accommodate the general physics-consistent channel model, by carefully dealing with the coupling between $\bar{\bB}_I$ and $\bB_I$ in \eqref{barBI}.

To encompass arbitrary RIS architectures, we introduce the following notation for the architecture-specific set $\mathcal{B}$: 
$$\mathcal{B}=\{\bB_I\in\R^{N_I\times N_I}\mid \bB_I=\bB_I^T,~[\bB_{I}]_{i,\mathcal{S}_i}=\mathbf{0}\},$$
where $\mathcal{S}_i$ defines the interconnection pattern of the $i$-th RIS element. In particular,  $j\in\mathcal{S}_i$ if and only if there is no interconnection between the $i$-th and $j$-th RIS elements.

As in \cite{wu}, we introduce an auxiliary variable $\bU=(\bar{\bH}_{RI}\bar{\bthe})^{H}$ to reformulate \eqref{multiuser} as
\begin{subequations}\label{multiuser2}
\begin{align}
\max_{\mathbf{W},\bU,\bar{\bB}_I,\bB_I}&\sum_{k=1}^{N_R}\log\left(1+\frac{|\bh_k(\bU)^H\bw_k|^2}{\sum_{j\neq k}|\bh_k(\bU)^H\bw_j|^2+\sigma^2}\right)\\
\text{s.t.}~~~~ &\left(Y_0\mathbf{I}-\mathrm{i}\bar{\mathbf{B}}_I\right)\bU=\left(Y_0\mathbf{I}+\mathrm{i}\bar{\mathbf{B}}_I\right)\bar{\bH}_{RI},\label{29b}\\
~~&\eqref{barBI},~ \bB_I\in\mathcal{B},\\
&\|\mathbf{W}\|_F^2\leq P_T,
\end{align}
\end{subequations}
where $\bh_{k}(\bU)^H$ is the $k$-th row of  $\bH(\bU)=\bar{\mathbf{H}}_{RT}+\bU^H\bar{\mathbf{H}}_{IT}.$ 
To solve problem \eqref{multiuser2}, we employ the ADMM-based algorithmic framework  in \cite[Eqs. (11a)\,--\,(11f)]{wu},  which first simplifies the sum-rate objective function using fractional programming technique \cite{FP,FP2} and  penalizes the bilinear constraint \eqref{29b} into the objective function via the augmented Lagrangian (AL) function, and then alternately updates the variables until convergence. 
The only difference lies in the update of $(\bar{\bB}_I, \bB_I)$, where  the architecture-specific constraint on $\bB_I$, i.e., $\bB_I \in \mathcal{B}$, is coupled with $\bar{\bB}_I$ through the linear constraint \eqref{barBI}.  In the following, we focus on the update of $(\bar{\bB}_I,\bB_I)$ involved in the ADMM algorithm. Details on the updates for the other variables are omitted for brevity and can be found in \cite{wu}.  

Analogous to \cite[Eq. (16)]{wu}, the $(\bar{\bB}_I,\bB_I)$-subproblem at the $(t+1)$-th iteration is
\begin{equation}\label{updateB1}
\begin{aligned}
\min_{\bar{\bB}_I,\bB_I}\hspace{-0.05cm}&~\frac{\rho}{2}\left\|\bar{\bB}_I(\mathrm{i}\bU^t\hspace{-0.05cm}+\hspace{-0.05cm}\mathrm{i}\bar{\H}_{RI})\hspace{-0.05cm}-\hspace{-0.05cm}\left(Y_0\bU^t\hspace{-0.05cm}-\hspace{-0.05cm}Y_0\bar{\H}_{RI}\hspace{-0.05cm}+\hspace{-0.05cm}\frac{\blam^{t}}{\rho}\right)\right\|^2\hspace{-0.05cm}\\
&\hspace{4.5cm}+\frac{\xi}{2}\|\bar{\bB}_I-\bar{\bB}_I^t\|^2_F\\
\text{s.t.}\,~&~\eqref{barBI},~ \bB_I\in\mathcal{B},
\end{aligned}
\end{equation}
where $\boldsymbol{\lambda}$ and $\rho$ are, respectively,  the Lagrange multiplier and penalty parameter in the AL function associated with the linear constraint \eqref{29b}, and $\xi$ is the regularization parameter. By transforming problem \eqref{updateB1} into the real space and eliminating variable $\bar{\bB}_I$ by constraint \eqref{barBI}, we obtain the following problem on $\bB_I$: 
 \begin{equation}\label{updateB2}
\min_{\bB_I\in\mathcal{B}}~~\frac{\rho}{2}\|\mathbf{L}\bB_I\mathbf{R}-\mathbf{\Gamma}_1\|_F^2+\frac{\xi}{2}\|\mathbf{L}\bB_I\mathbf{L}-\mathbf{\Gamma}_2\|_F^2,
\end{equation}
where $\mathbf{L}=\sqrt{Y}_0\RR(\bar{\bY}_{II})^{-\frac{1}{2}}\in\R^{N_I\times N_I},$ 
$$\begin{aligned}
\mathbf{R}&=\mathbf{L}[\RR(\mathrm{i}\bU^t+\mathrm{i}\bar{\H}_{RI}),~\I(\mathrm{i}\bU^t+\mathrm{i}\bar{\H}_{RI})]\in\R^{N_I\times 2N_R}\\
\boldsymbol{\Gamma}_1&=-\mathbf{L}\,\mathcal{I}(\bar{\bY}_{II})\mathbf{R}+\hspace{-0.05cm}\left[\RR\hspace{-0.1cm}\left(Y_0(\bU^t\hspace{-0.05cm}-\hspace{-0.05cm}\bar{\H}_{RI})\hspace{-0.05cm}+\hspace{-0.05cm}\frac{\blam^{t}}{\rho}\right)\right.,\\
&\hspace{2.7cm}\left.\I\hspace{-0.05cm}\left(Y_0(\bU^t\hspace{-0.05cm}-\hspace{-0.05cm}\bar{\H}_{RI})\hspace{-0.05cm}+\hspace{-0.05cm}\frac{\blam^{t}}{\rho}\right)\right]\in\R^{N_I\times 2N_R},
\end{aligned}
$$
and 
$\boldsymbol{\Gamma}_2=\bar{\bB}_I^t-\mathbf{L}\I(\bar{\bY}_{II})\mathbf{L}\in\R^{N_I\times 2N_R}.$
In the following, we transform \eqref{updateB2} into an unconstrained quadratic program. 
 Let 
\begin{equation}\label{relation:xB}
\x=[[\bB_{I}]_{1,\mathcal{S}_1^c},[\bB_{I}]_{2,\mathcal{S}_2^c},\dots,[\bB_{I}]_{N_I,\mathcal{S}_{N_I}^c}]^T
\end{equation}
with $
\mathcal{S}_i^c=\{j\geq i\mid j\notin \mathcal{S}_i\}, 
$ i.e., 
  $\x$ collects all the non-zero elements in the upper tridiagonal of $\bB_I$.  In addition, let $\boldsymbol{l}_i$ denote the $i$-th column of $\mathbf{L}$, $\mathbf{r}_{i}^T$ denote the $i$-th row of $\mathbf{R}$, $\bb=\text{vec}(\boldsymbol{\Gamma}_1^T)$, and $\mathbf{d}=\text{vec}(\boldsymbol{\Gamma}_2^T)$. 
   Then problem \eqref{updateB2} can be equivalently expressed as
   \begin{equation}\label{QP}
   \min_{\x}~\frac{\rho}{2}\|\bA\x-\bb\|_2^2+\frac{\xi}{2}\|\bC\x-\mathbf{d}\|_2^2,
   \end{equation}
 where 
 $\bA=[\bA_1,\bA_2,\dots,\bA_{N_I}]$ with 
 $$
[\bA_{i}]_{:,q}=\left\{
 \begin{aligned}
   \boldsymbol{l}_{i}\otimes\mathbf{r}_{i},~~~~~~~~~~~&\text{if }q=1;\\
  \boldsymbol{l}_{i}\otimes\mathbf{r}_{i_q}+\boldsymbol{l}_{i_q}\otimes \mathbf{r}_i,~~~&\text{otherwise},\\
   \end{aligned}
   \right.
   $$
   and 
   $\bC=[\bC_1,\bC_2,\dots,\bC_{N_I}]$ with 
 $$
[\bC_{i}]_{:,q}=\left\{
 \begin{aligned}
   \boldsymbol{l}_{i}\otimes\boldsymbol{l}_{i},~~~~~~~~~~~&\text{if }q=1;\\
  \boldsymbol{l}_{i}\otimes\boldsymbol{l}_{i_q}+\boldsymbol{l}_{i_q}\otimes \boldsymbol{l}_i,~~~&\text{otherwise}.\\
   \end{aligned}
   \right.
   $$
Solving \eqref{QP} gives 
\begin{equation}\label{solution:x}
\x=(\rho\bA^T\bA+\xi\bC^T\bC)^{-1}(\rho\bA^T\bb+\xi\bC^T\mathbf{d}).
\end{equation}
By further utilizing \eqref{relation:xB} gives the solution to \eqref{updateB1}.

We remark that the problem in \cite[Eq. (17)]{wu} corresponds to the special case of \eqref{updateB2} with $\mathbf{L}=\mathbf{I}$. By substituting $\mathbf{L}=\mathbf{I}$ into matrices $\bA$ and $\bC$, the solution in \eqref{solution:x} reduces to \cite[Eq. (20)]{wu}.

The adopted framework can be readily extended to the multiuser MIMO scenario by applying the fractional programming technique in its matrix form \cite{FP3}. This generalization has been discussed in \cite[Section V-B]{wu}.

\section{Simulation}\label{sec:simulation}
In this section, we present simulation results to examine the impact of different EM effects and approximations on system performance, using the algorithms proposed in Section \ref{sec:opt}. We focus on fully-connected RIS and evaluate its performance under different channel models.  In particular, we investigate two key effects: mutual coupling (MC) among RIS elements and unilateral approximation (UA). To this end, we assume that the source impedances at the transmitter and load impedances  at the receiver  are equal to the reference impedance $Z_0$,  set as $Z_0=50 \,\Omega$, which gives  $\bZ_R=Z_0 \mathbf{I}$ and  $\bZ_T=Z_0\mathbf{I}$, and  the 
 transmit and receive antennas are perfectly matched with no mutual coupling, which gives $\bZ_{TT}=Z_0\mathbf{I}$ and $\bZ_{RR}=Z_0\mathbf{I}.$  In addition, the direct link between the transmitter and receiver is assumed to be fully obstructed, i.e., $\bZ_{RT}=\mathbf{0}$. We focus on the following three channel models:
\begin{enumerate}
\item The general physics-consistent model in \eqref{general_model}. We  denote all related lines  as ``MC aware, w/o UA'';
\item The approximate model with unilateral approximation given by  $\bH_{\text{app,2}}$, which accounts for mutual coupling among RIS elements. We  denote all lines related to $\bH_{\text{app,2}}$ as ``MC aware, w/ UA'';
\item The approximate model given by $\bH_{\text{app,3}}$, which assumes no mutual coupling among RIS elements and that the unilateral approximation holds. We  denote all lines related to $\bH_{\text{app,3}}$ as ``MC unaware, w/ UA''.

\end{enumerate}
\begin{figure*}
\subfigure[SISO system.]{\includegraphics[width=0.33\textwidth]{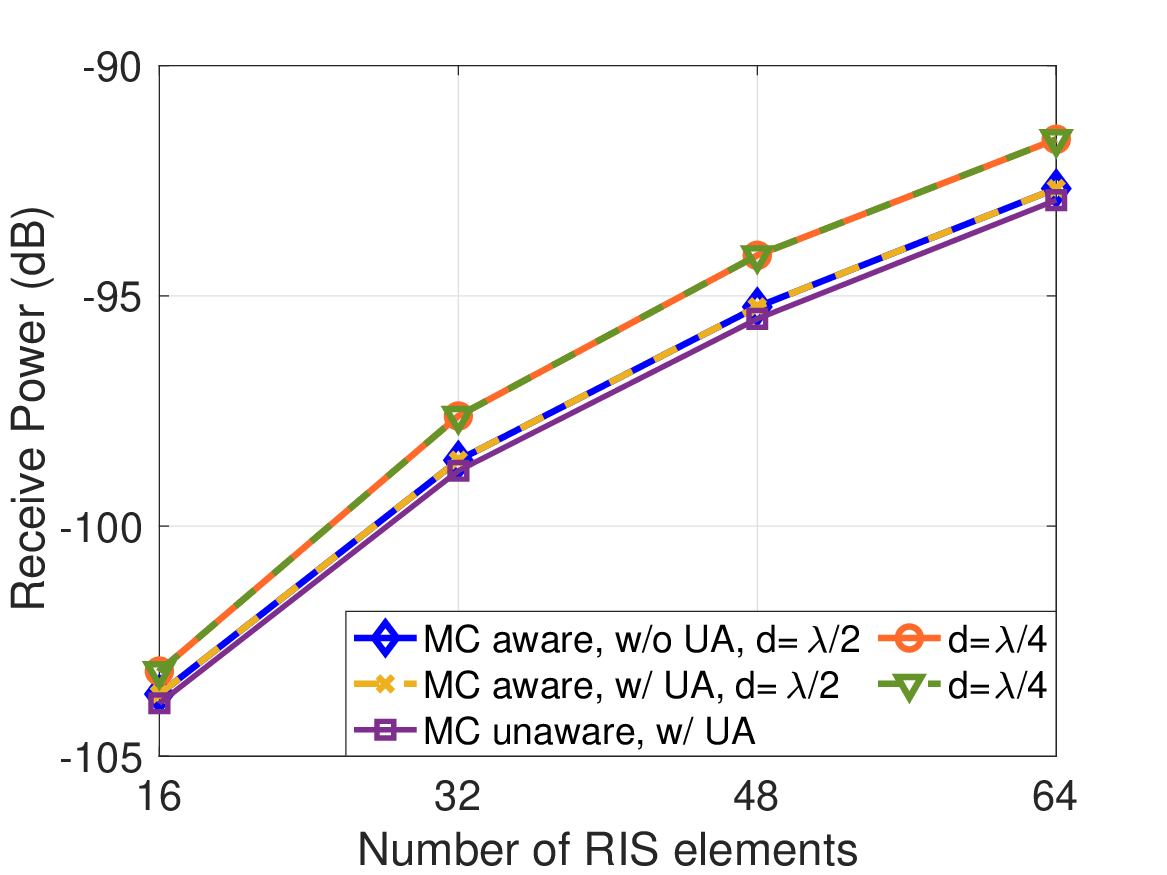}}
\subfigure[Single-stream MIMO system, $N_T=N_R=4$.]{\includegraphics[width=0.33\textwidth]{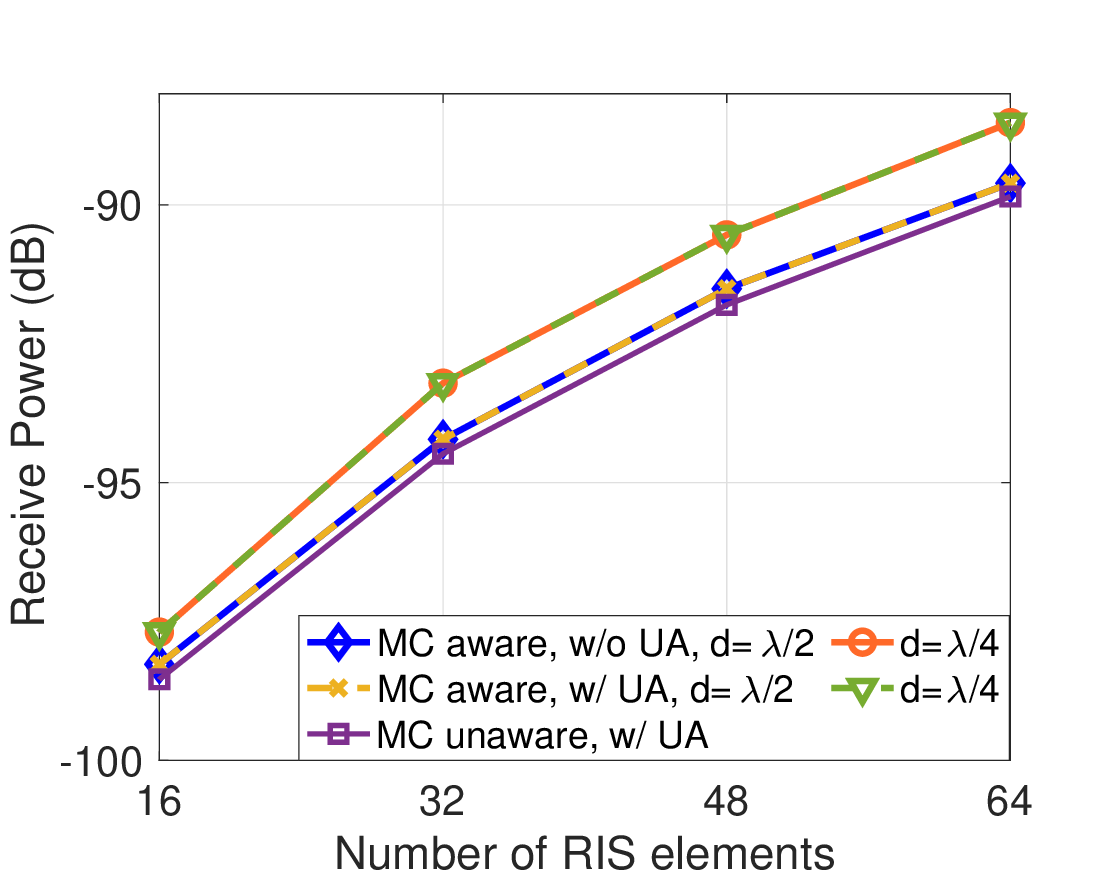}}
\subfigure[Multiuser MISO system, $N_T=N_R=4$.]{\includegraphics[width=0.33\textwidth]{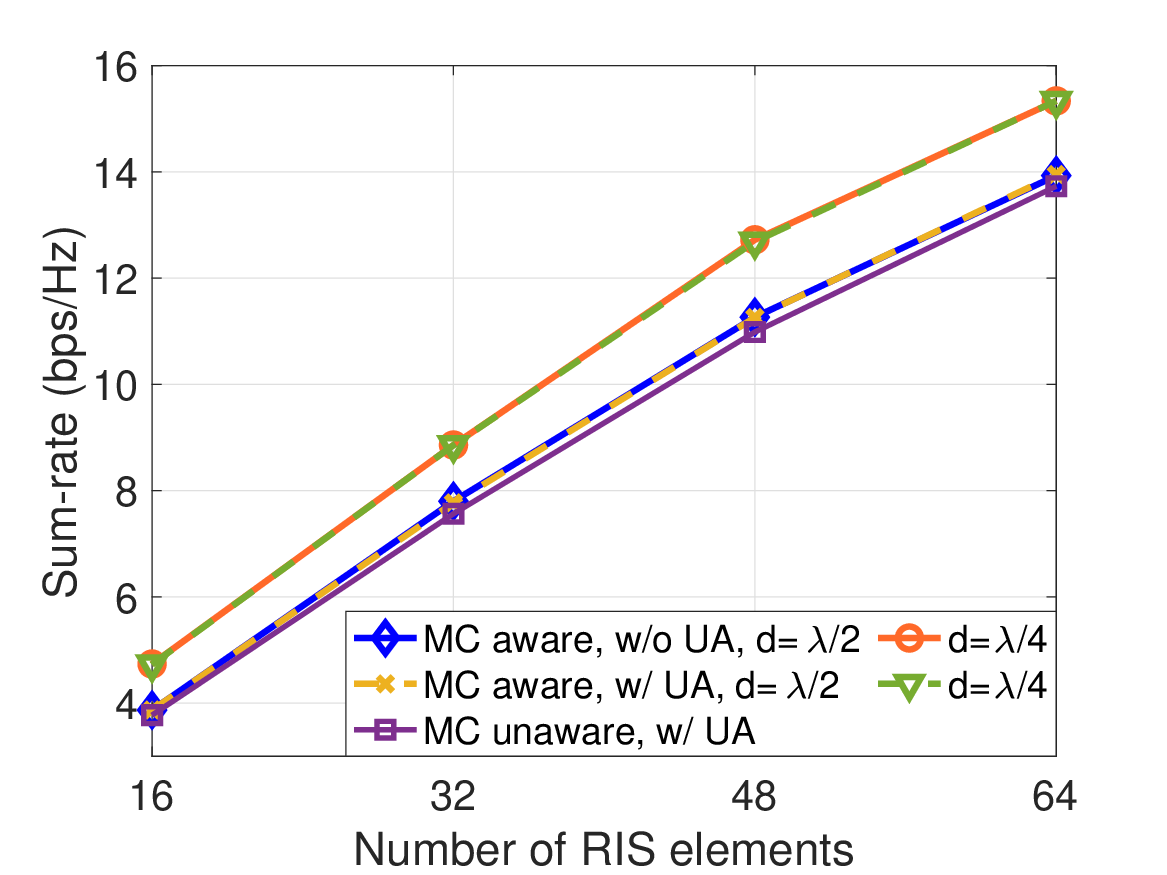}}
\caption{Receive power and sum-rate versus the number of RIS elements for different channel models and inter-element spacing $d$. }
\label{MIMO_receive}
\centering
\end{figure*}
\begin{figure*}
\subfigure[SISO system.]{\includegraphics[width=0.33\textwidth]{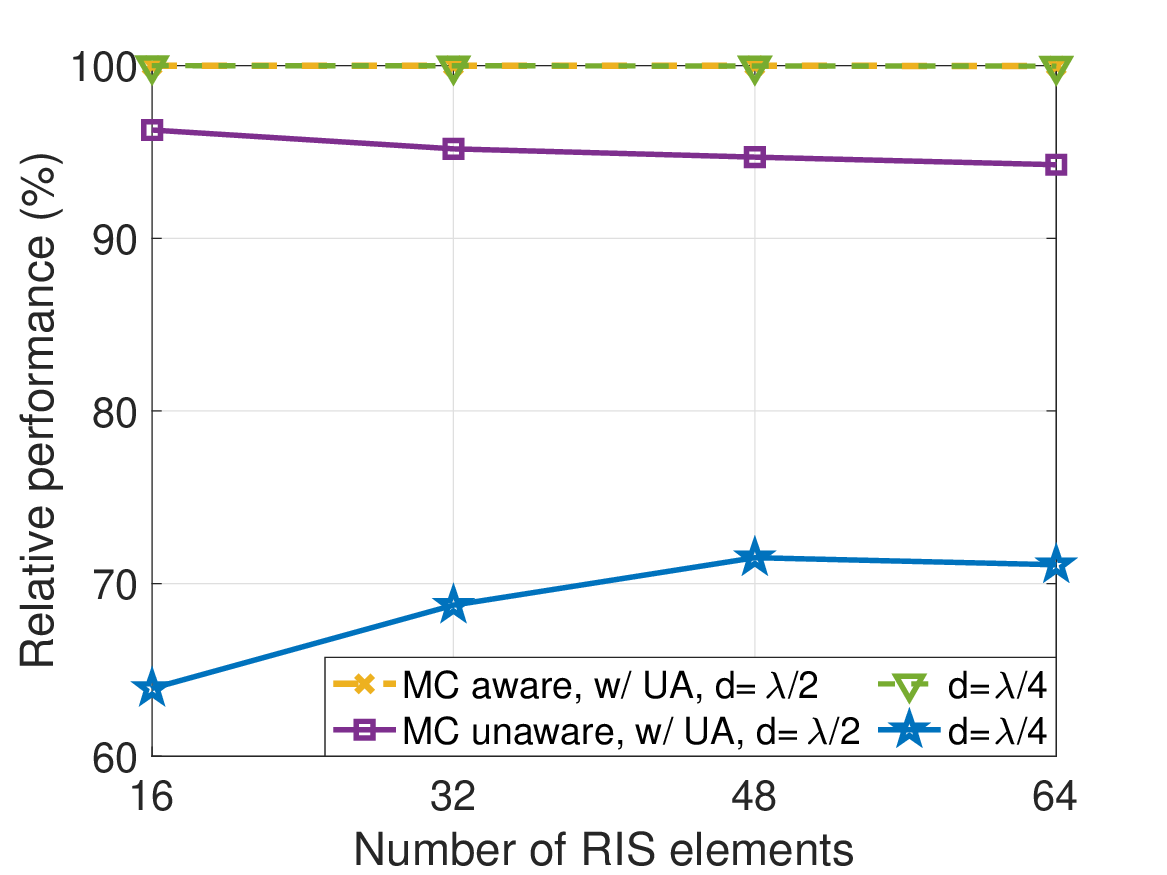}}
\subfigure[Single-stream MIMO system, $N_T=N_R=4$.]{\includegraphics[width=0.33\textwidth]{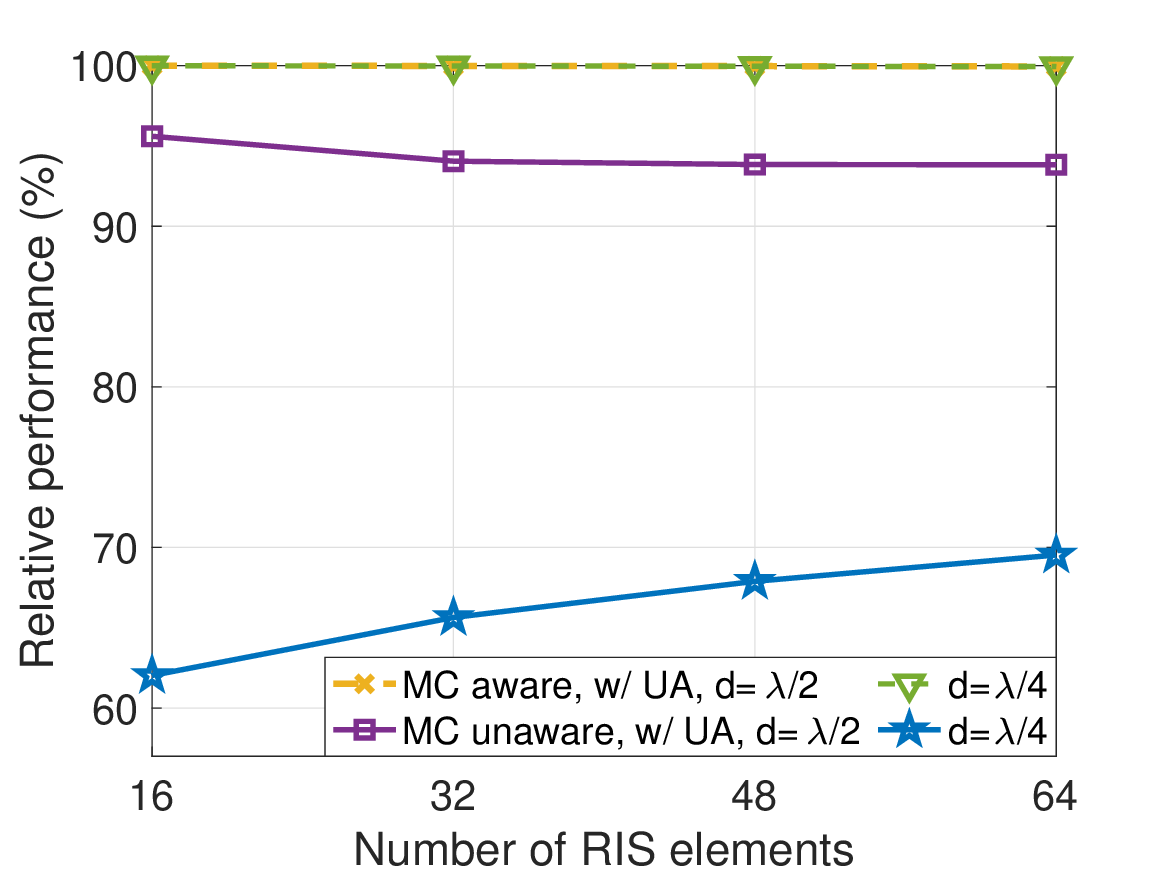}}
\subfigure[Multiuser MISO system, $N_T=N_R=4$.]{\includegraphics[width=0.33\textwidth]{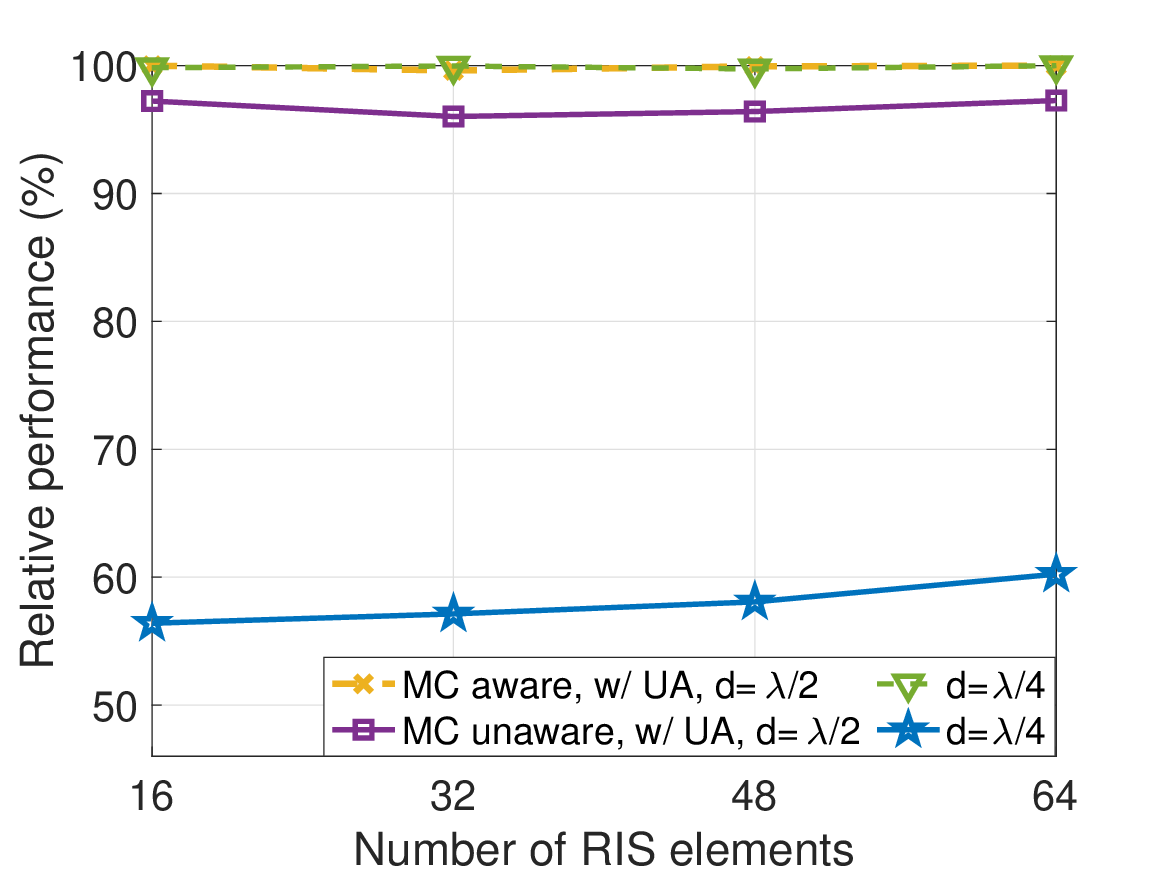}}
\caption{Relative performance of the solutions obtained from approximate channel models with respect to the solution obtained from the general physics-consistent model, which is computed as  $\frac{F(\bH(\bar{\bthe}_{\text{app}}),\mathbf{W}_{\text{app}})}{F(\bH(\bar{\bthe}^*),\mathbf{W}^*)}\times 100\%$, where $F(\bH(\bar{\bthe}),\mathbf{W})$ is the receive power (i.e., $F(\bH(\bar{\bthe}),\mathbf{W})=\|\bH(\bar{\bthe})\|_2^2$) for the SISO and single-stream MIMO systems, and is the sum-rate (i.e., $F(\bH(\bar{\bthe}),\mathbf{W})=\sum_{k=1}^{N_R}\log(1+\frac{|\bh_k(\bar{\bthe})^H\bw_k|^2}{\sum_{j\neq k}|\bh_k(\bar{\bthe})^H\bw_j|^2+\sigma^2})$) for the multiuser MISO system, $(\bar{\bthe}_{\text{app}},\mathbf{W}_{\text{app}})$ is  the solution obtained using  the approximate model $\bH_{\text{app},2}$ or $\bH_{\text{app},3}$, and $(\bar{\bthe}^*,\mathbf{W}^*)$ is the solution obtained using $\bH$.}
\label{fig_relative}
\centering
\end{figure*}

As in \cite{mutualcoupling2}, we consider a RIS implemented as a uniform planar array (UPA) of radiating elements in the $x-y$ plane, and with inter-element distance $d$. The RIS elements are thin wire dipoles parallel to the $y$ axis with length $\ell=\lambda/4$ and radius 
$r\ll \ell$, where $\lambda=c/f$ is the wavelength of the frequency $f=28$ GHz, and $c$ is the speed of light. All RIS elements are assumed to be perfectly matched to $Z_0=50\,\Omega$, i.e., $[\bZ_{II}]_{n,n}=Z_0$. Following \cite{mutualcoupling, mutualcoupling2}, we model $[\bZ_{II}]_{p,q}$ with $p\neq q$, which captures the mutual coupling among the RIS elements located at $(x_p,y_p)$ and $(x_q,y_q)$, as follows:
\begin{equation}\label{mutual}
\begin{aligned}
& {\left[\mathbf{Z}_{I I}\right]_{q, p}=\int_{y_q-\frac{\ell}{2}}^{y_q+\frac{\ell}{2}} \int_{y_p-\frac{\ell}{2}}^{y_p+\frac{\ell}{2}} \frac{\mathrm{i} \eta_0}{4 \pi k_0}\left(\frac{\left(y^{\prime \prime}-y^{\prime}\right)^2}{d_{q, p}^2}\right.} \\
& \left.\times\left(\frac{3}{d_{q, p}^2}+\frac{3 \mathrm{i} k_0}{d_{q, p}}-k_0^2\right)-\frac{\mathrm{i} k_0+d_{q, p}^{-1}}{d_{q, p}}+k_0^2\right) \frac{e^{-\mathrm{i} k_0 d_{q, p}}}{d_{q, p}} \\
& \times\frac{\sin \left(k_0\left(\frac{\ell}{2}-\left|y^{\prime}-y_p\right|\right)\right) \sin \left(k_0\left(\frac{\ell}{2}-\left|y^{\prime \prime}-y_q\right|\right)\right)}{\sin ^2\left(k_0 \frac{\ell}{2}\right)} d y^{\prime} d y^{\prime \prime},
\end{aligned}
\end{equation}
where $\eta_0 = 377 \Omega$ is the impedance of free space,  $k_0=2\pi/\lambda$
is the wavenumber, and $d_{q,p}=(x_q-x_p)^2 + (y''-y')^2$. 

Throughout the simulations, the transmit power is set to $P_T = 20$ dBm and the noise power is $\sigma^2 = -80$ dBm. To thoroughly analyze the impact of mutual coupling and unilateral approximation, we consider two scenarios with different modeling assumptions for the channels $\mathbf{Z}_{RI}$ and $\bZ_{IT}$.

\begin{enumerate}
\item First, we consider a common scenario in which the distances between the transmitter, RIS, and receiver are sufficiently large. In this case, the elements of $\bZ_{RI}$ and $\bZ_{IT}$ are generated as independent Gaussian random variables  \cite{mutualcoupling2}, i.e., $[\bZ_{RI}]_{i,j}\sim\mathcal{CN}(0,\rho_{RI}),$ and $[\bZ_{IT}]_{i,j}\sim\mathcal{CN}(0,\rho_{IT})$, with path gain $\rho_{IT}=4Z_0^210^{-8}$ and $\rho_{RI}=4Z_0^210^{-4}$. The corresponding results are given in Figs. \ref{MIMO_receive}\,--\,\ref{fig_wrtK}.
\item Then,  we consider a scenario in which the distance between the transmitter and the RIS is on the order of the wavelength. The receiver is still assumed to be located far from the RIS. Specifically,  we set the transmitter parallel to the $y$-axis, centered at $(d,0,r\lambda)$,  where $d$ is the inter-element spacing of RIS, $\lambda$ is the wavelength, and $r$ is a factor determining the distances between transmitter and  RIS.  The matrix $\bZ_{IT}$ is then computed according to the model in \eqref{mutual}, and all other components of $\bZ$  are generated as in 1). The results are given in Fig. \ref{fig_smalld}.
\end{enumerate}

To investigate the performance of different channel models for different systems, we solve \eqref{SISO:problem}, \eqref{MIMO:problem}, and \eqref{multiuser} under both the general physics-consistent model $\bH$ and the approximate models $\bH_{\text{app},2}$ and $\bH_{\text{app},3}$. Specifically, the approximate solutions are obtained by applying the algorithms in Section \ref{sec:opt} to solve the receive power or sum-rate maximization problems using the approximate channel models $\bH_{\text{app,2}}$ and $\bH_{\text{app,3}}$, i.e., $\bar{\bH}_{RT}$, $\bar{\bH}_{RI}$, $\bar{\bH}_{IT}$, and $\bar{\bthe}$  in problems \eqref{SISO:problem}, \eqref{MIMO:problem}, and \eqref{multiuser} are replaced by their approximate counterparts.

In Fig. \ref{MIMO_receive}, we depict the receive power for SISO and single-stream MIMO systems and the sum rate for multiuser MISO systems achieved by the three channel models. To better investigate the impact of mutual coupling among RIS antennas, we consider two different values of inter-element spacing, $d=\lambda/2$ and $d=\lambda/4$, respectively. 
As shown in the figure,  
the performance with and without the unilateral approximation is almost indistinguishable in all cases. In contrast,  mutual coupling among RIS antennas has a noticeable impact on the performance, particularly when the inter-element spacing is small. The presence of mutual coupling improves system performance compared with the case without coupling, and stronger coupling leads to more significant performance gains.
 We remark here that the benefits of mutual coupling under Rayleigh fading channels were previously reported for SISO systems in \cite{mutualcoupling2}. Our results demonstrate that the same conclusion also holds for MIMO and multiuser scenarios.  

In Fig. \ref{fig_relative}, we further evaluate the quality of the approximate solutions obtained using the channel models $\bH_{\text{app,2}}$ and $\bH_{\text{app,3}}$ to better visualize the accuracy of different approximations. The quality of each approximate solution is quantified by applying it to the general physics-consistent channel model $\bH$ and computing its relative performance compared with the solution obtained from $\bH$. From Fig. \ref{fig_relative}, we can draw similar conclusions  as those from Fig. \ref{MIMO_receive}. First, the unilateral approximation has a negligible effect in all cases, yielding nearly 100\% relative performance compared to the solution of the general physics-consistent model. Second, ignoring mutual coupling among RIS elements  leads to performance degradation, particularly when the inter-element spacing is small. For example, when $d=\lambda/4$, the solution from the mutual-coupling-unaware model achieves only around $65\%$ of the performance.

In Fig. \ref{fig_wrtK}, we investigate how the number of transmit/receive antennas and the number of users affect the accuracy of different approximations.  We consider a single-stream MIMO system and a multiuser MISO system, and  depict the relative performance of various approximate channel models as a function of the number of transmit/receive antennas and the number of transmit antennas/users, respectively. As can be observed, the unilateral approximation remains accurate across all considered systems, whereas the effect of mutual coupling becomes more pronounced as the number of transmit/receive antennas or users increases.

\begin{figure}
\centering
\subfigure[Single-stream MIMO system.]{\includegraphics[width=0.33\textwidth]{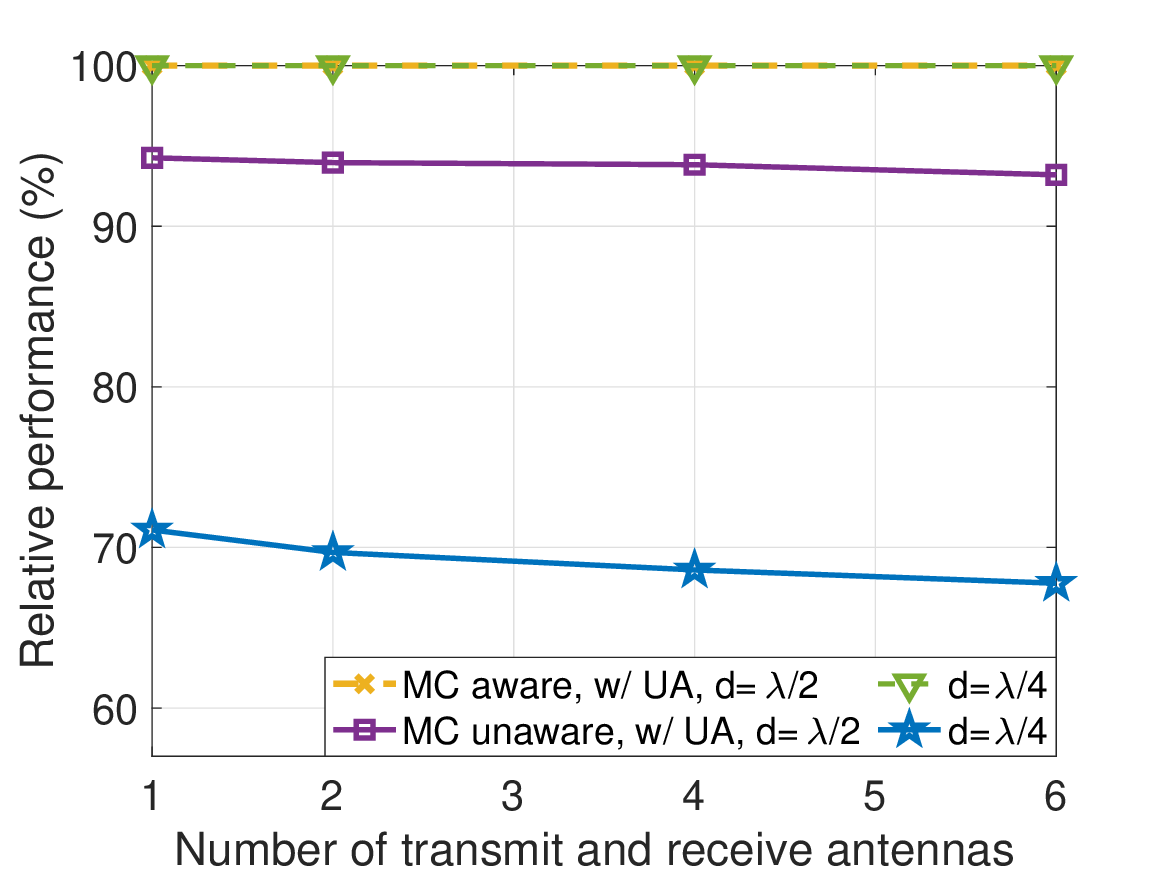}}
\subfigure[Multiuser MISO system.]{\includegraphics[width=0.33\textwidth]{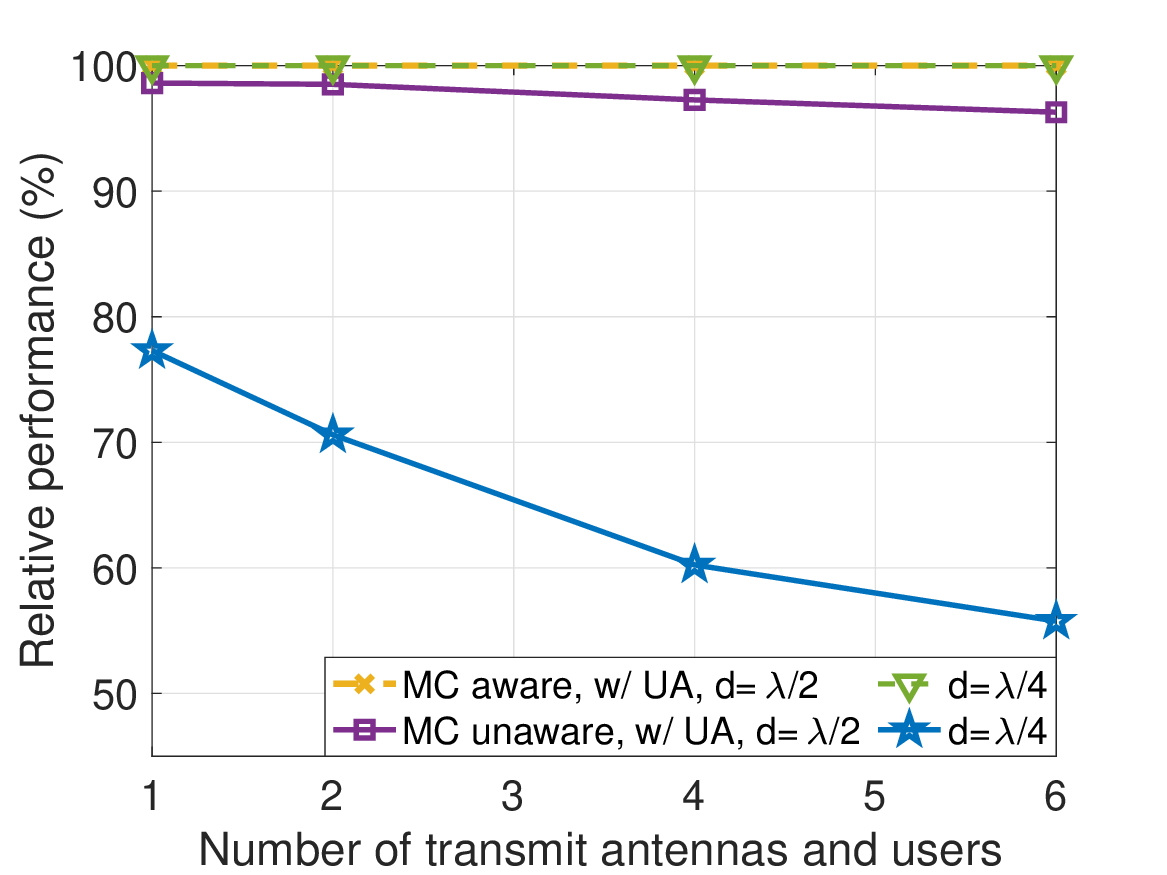}}
\caption{Relative performance of the solutions obtained from approximate channel models versus the number of transmit antennas and receive antennas/users, where $N_T=N_R$.  The number of RIS elements is fixed as $N_I=64$.}
\label{fig_wrtK}
\end{figure}
\begin{figure}
\centering
\subfigure[SISO and single-stream MIMO systems.]{\includegraphics[width=0.33\textwidth]{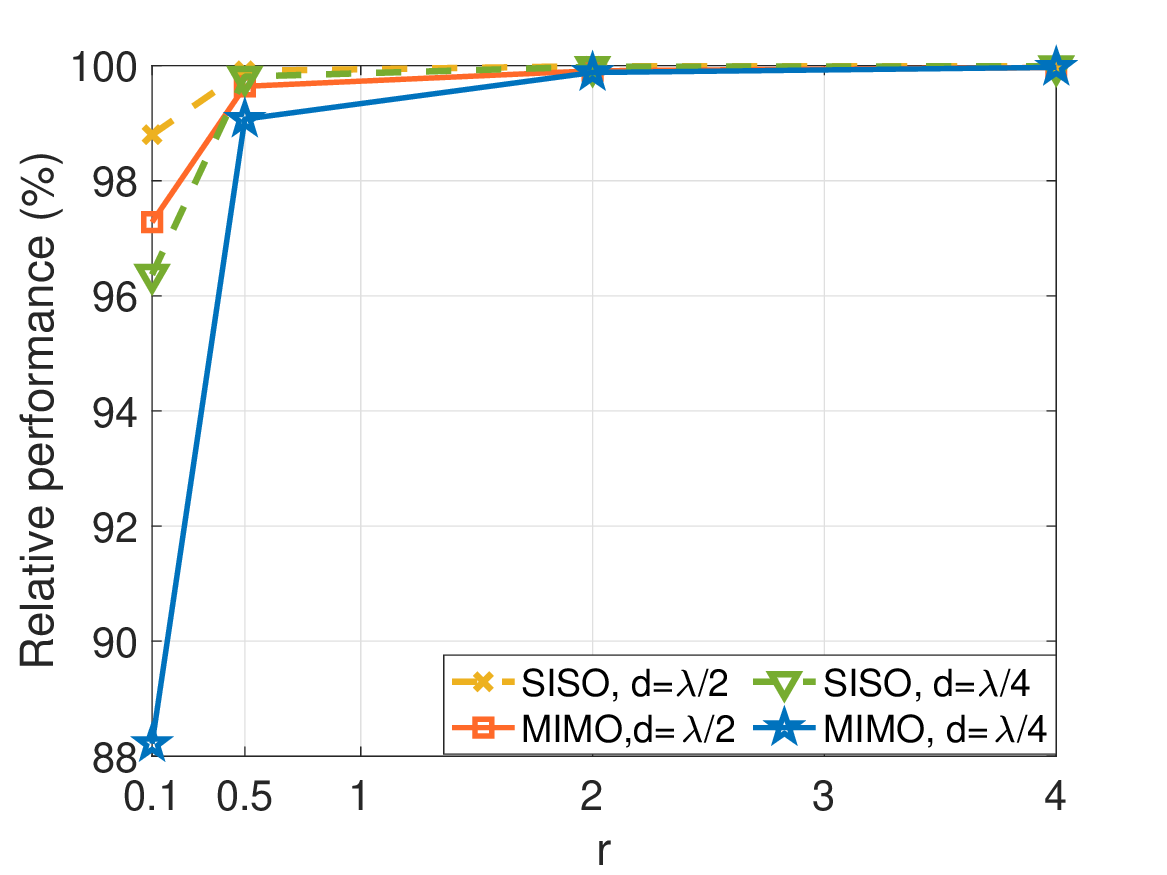}}
\subfigure[Multiuser MISO system.]{\includegraphics[width=0.33\textwidth]{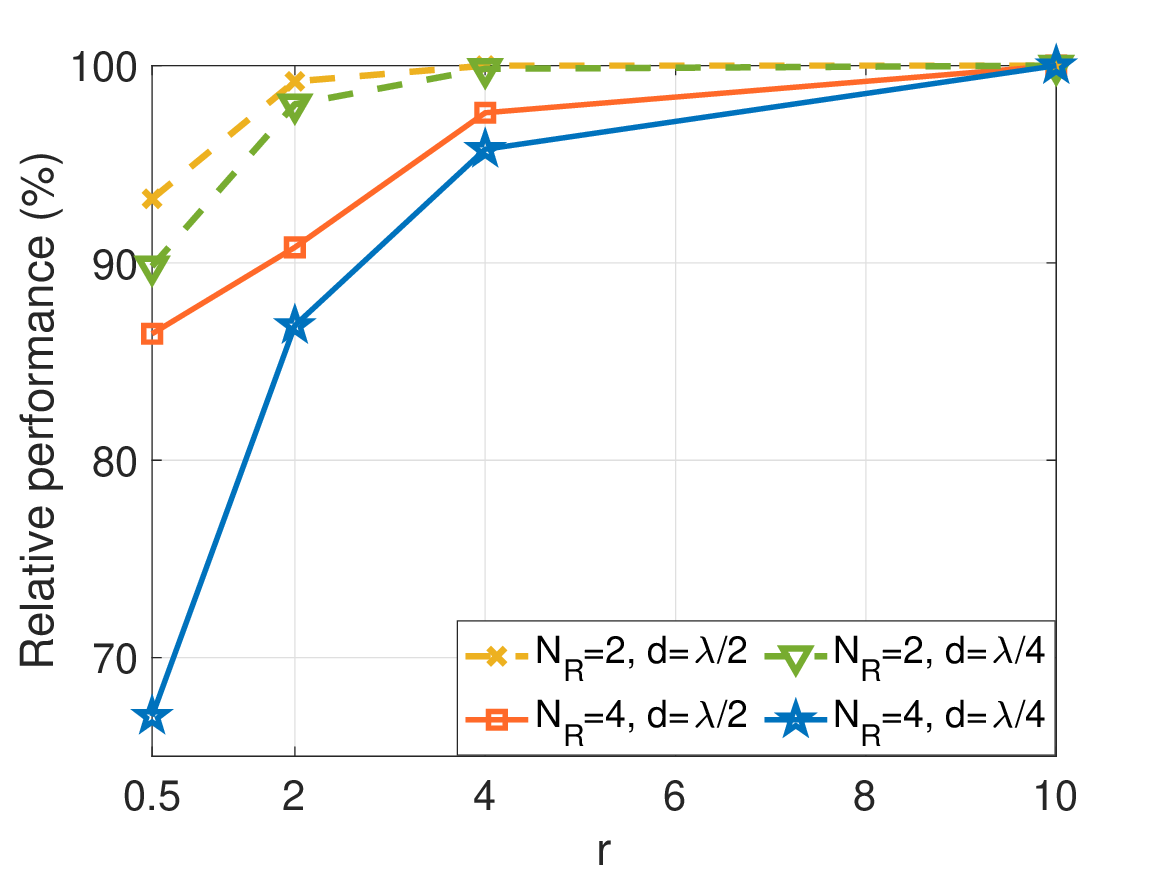}}
\caption{Relative performance of the solution obtained from approximate channel model $\bH_{\text{app,2}}$ versus 
the distance (normalized by wavelength) between transmitter and RIS. The number of RIS elements is fixed as $N_I=64$. The number of transmit and receive antennas for the MIMO system is $N_T=N_R=4$. The numbers of transmit antennas and users for the multiuser MISO system are set as the same. }
\label{fig_smalld}
\end{figure}

In all the above figures, the distances between  transmitter, RIS, and  receiver are set to be much larger than the wavelength, which is typically the case in practice. It is commonly accepted that the unilateral approximation accurately represents real world channels under this condition, as also validated by our results. In Fig. \ref{fig_smalld}, we further investigate the range of validity of the unilateral approximation by considering an extreme scenario where the distance between the transmitter and RIS is on the order of the wavelength. 
We assume that mutual coupling is aware and focus on the effect of unilateral approximation by comparing the performance achieved with models $\bH_{\text{app,2}}$ and $\bH$. The number of RIS elements is  fixed as $N_I=64$.

As shown in Fig. \ref{fig_smalld}, the unilateral approximation remains accurate across a broad range, even when the distance between the transmitter and RIS is on the order of the wavelength. For example, even at $0.1\lambda$, the solution obtained under the unilateral approximation still achieves over 95\% of the performance of the exact solution for the SISO system.   Although the accuracy decreases with the the number of transmit/receive antennas and users, it is still sufficiently high for  all  normal RIS-aided wireless communication systems.

We remark that we have also investigated the impact of unilateral approximation and mutual coupling under different RIS architectures (other than the fully-connected RIS). The results exhibit trends consistent with those in Figs. \ref{fig_relative} -- \ref{fig_smalld}. This is expected, since the unilateral approximation and mutual coupling are related to EM effects among/of the transmit, RIS, and receive antennas, and are irrelevant to the circuit topology of the reconfigurable impedance network.   For brevity and clarity, we omit these additional figures.

In summary, we can draw the following conclusions from the simulation results regarding the effects of mutual coupling and unilateral approximation.  First, stronger mutual coupling among RIS elements can enhance system performance for all SISO, MIMO, and multiuser MISO systems under Rayleigh fading channels. Second, ignoring the mutual coupling among RIS elements in the optimization process leads to performance degradation, particularly when the inter-element spacing of RIS is small.  Third, the unilateral approximation is highly accurate  even when the distance between the transmitter and RIS is only a  few wavelengths. Therefore, it can be reliably employed in normal RIS-aided wireless communication  systems to simplify the channel model without loss of accuracy.

\section{Conclusion}\label{sec:7}
In this paper, we studied the architecture design and optimization of RIS under the general physics-consistent model derived in \cite{generalmodel}, which captures practical non-linearities such as imperfect matching and mutual coupling. To this end, we derived an explicit expression of the model that makes the impact of the RIS explicitly visible. Under the common assumptions that the RIS is lossless and reciprocal, we further derived a compact reformulation of the model, thereby facilitating both optimization and analysis.
Based on this formulation, we make the following two key contributions. First, we prove that band-connected RIS is optimal under the general physics-consistent model, which unifies and generalizes existing results in \cite{tree,mutualcoupling2,graph}. Second, we develop optimization algorithms for different systems under the general physics-consistent model. This enables us to give a comprehensive evaluation of the impact of various effects and approximations. Our simulations demonstrate that the unilateral approximation widely adopted in the literature is highly accurate for normal RIS-aided wireless communication systems. In contrast, mutual coupling among RIS elements has a significant influence on system performance: stronger mutual coupling can enhance performance under Rayleigh fading channels, and
neglecting it during optimization leads to performance degradation.  
\appendices
\section{Proof of Lemma \ref{lem:pd}}\label{app:pd}
For notational simplicity, let $\bX:=\bY_{IR}$, $\bM:=\mathbf{Y}_{RR}+\mathbf{Y}_R$, $\bM_R:=\Re(\bM)$, and $\bM_I:=\Im(\bM)$. Hence, 
\begin{equation}\label{pd:eq1}
\begin{aligned}
&\Re(\mathbf{Y}_{II}-\mathbf{Y}_{IR}(\mathbf{Y}_{RR}+\mathbf{Y}_R)^{-1}\mathbf{Y}_{RI})\\
&\overset{(a)}{=}\Re(\mathbf{Y}_{II})-\Re(\mathbf{X}\bM^{-1}\mathbf{X}^T)\\
&\overset{(b)}{=}\Re(\bY_{II})-\Re(\bX)\Re(\bM^{-1})\Re(\bX)^T+\Im(\bX)\Re(\bM^{-1})\Im(\bX)^T\\
&~~~~+\Re(\bX)\Im(\bM^{-1})\Im(\bX)^T+\Im(\bX)\Im(\bM^{-1})\Re(\bX)^T,
\end{aligned}
\end{equation}
where (a) uses $\bY_{RI}=\bY_{IR}^T$ and (b) is obtained by expanding $\Re(\mathbf{X}\bM^{-1}\mathbf{X}^T)$. 
According to \cite[Proposition 1]{mutualcoupling2}, the real part of the admittance matrix of a reciprocal and lossy network (which is the case for a wireless channel) is positive definite with probability one, i.e., $\Re(\mathbf{Y})\succ \mathbf{0}$. Hence, 
$$\left[\begin{matrix} 
\Re(\bY_{II})&\Re(\bY_{IR})\\\Re(\bY_{RI})&\Re(\bY_{RR})
\end{matrix}\right]\succ \mathbf{0}.$$
This further implies that $\Re(\bY_{RR})$ and its Schur complement  are positive definite \cite{boyd2004convex}, i.e., $\Re(\bY_{RR})\succ \mathbf{0}$ and 
\begin{equation}\label{Schur}
\Re(\mathbf{Y}_{II})-\Re(\mathbf{X})\Re(\mathbf{Y}_{RR})^{-1}\Re(\mathbf{X})^T\succ\mathbf{0},
\end{equation}
where we recall that $\bX=\bY_{IR}$.
The matrix $\bY_R$ is a diagonal matrix with diagonal elements denoting the load admittances, which are real and positive. Hence, 
 \begin{equation}\label{MR}
 \bM_R=\Re(\bY_{RR})+\bY_R\succ\Re(\bY_{RR}).
 \end{equation} This, together with \eqref{Schur}, further gives
\begin{equation}\label{pd:eq2}
\Re(\mathbf{Y}_{II})-\Re(\mathbf{X})\bM_R^{-1}\Re(\mathbf{X})^T\succ\mathbf{0}.
\end{equation}
By reorganizing the last line of \eqref{pd:eq1}, we get 
$$
\begin{aligned}
&\Re(\mathbf{Y}_{II}-\mathbf{Y}_{IR}(\mathbf{Y}_{RR}+\mathbf{Y}_R)^{-1}\mathbf{Y}_{RI})\\
&=\Re(\bY_{II})-\Re(\bX)\bM_R^{-1}\Re(\bX)^T\\
&~~~+\Re(\bX)(\bM_R^{-1}\hspace{-0.05cm}-\hspace{-0.05cm}\Re(\bM^{-1}))\Re(\bX)^T\hspace{-0.08cm}+\hspace{-0.05cm}\Im(\bX)\Re(\bM^{-1})\Im(\bX)^T\\
&~~~+\Re(\bX)\Im(\bM^{-1})\Im(\bX)^T+\Im(\bX)\Im(\bM^{-1})\Re(\bX)^T.
\end{aligned}
$$
According to \eqref{pd:eq2}, the above matrix is positive definite if
$$
\begin{aligned}
[\begin{matrix}\Re(\bX)&\Im(\bX)\end{matrix}]\left[\begin{matrix}
\bM_R^{-1}-\Re(\bM^{-1})&\hspace{-0.1cm}\Im(\bM^{-1})\\\Im(\bM^{-1})&\hspace{-0.1cm}\Re(\bM^{-1})\end{matrix}\right]\left[\begin{matrix}\Re(\bX)^T\\\Im(\bX)^T\end{matrix}\right]\succeq\mathbf{0}.
\end{aligned}$$
To establish this,   it suffices to show that 
\begin{equation}\label{pd:eq3}
\left[\begin{matrix}
\bM_R^{-1}-\Re(\bM^{-1})&\Im(\bM^{-1})\\\Im(\bM^{-1})&\Re(\bM^{-1})\end{matrix}\right]\succeq \mathbf{0}.
\end{equation}
Next, we prove \eqref{pd:eq3} by showing that $\Re(\bM^{-1})\succ \mathbf{0}$ and its Schur complement 
$$\bM_R^{-1}-\Re(\bM^{-1})-\Im(\bM^{-1})\Re(\bM^{-1})^{-1}\Im(\bM^{-1})\succeq\mathbf{0}.$$
Note that for a matrix $\bM$ in complex space, the real and imaginary parts of its inversion can be expressed as 
\begin{equation}\label{invM}
\begin{aligned}
\Re(\bM^{-1})&=(\bM_R+\bM_I\bM_R^{-1}\bM_I)^{-1},\\
\Im(\bM^{-1})&=-\bM_R^{-1}\bM_I(\bM_R+\bM_I\bM_R^{-1}\bM_I)^{-1}.
\end{aligned}
\end{equation}
  By \eqref{MR}, $\bM_R\succ \mathbf{0}$, and thus 
$\bM_R+\bM_I\bM_R^{-1}\bM_I\succeq \bM_R\succ\mathbf{0}$.  It follows that $\Re(\bM^{-1})\succ\mathbf{0}$. 
In addition, applying \eqref{invM} gives
$$
\begin{aligned}
&\bM_R^{-1}-\Re(\bM^{-1})-\Im(\bM^{-1})\Re(\bM^{-1})^{-1}\Im(\bM^{-1})=\mathbf{0}.
\end{aligned}$$
This completes the proof.
\section{Proof of Proposition \ref{pro:arch}}\label{app:proarch}
 Our proof relies on the following auxiliary result, which is implicitly proved in \cite[Section IV-B2]{graph}. 
\begin{lemma}[{\cite[Section IV-B2]{graph}}]\label{lemma:solution}
Given $\bX\in\R^{m\times n}$ and $\bY\in\R^{m\times n}$, where $m\geq n$. Consider the following system:
$$
\left\{
\begin{aligned}
&\bB\bX=\bY,\\
&\bB=\bB^T,~
[\bB]_{i,j}=0,~~\forall~|j-i|>n-1.
\end{aligned}
\right.
$$
The above system admits a solution $\bB\in\R^{n\times n}$ if (i) $\bX^T\bY=\bY^T\bX$, and (ii) $\bX$ has no singular submatrices. 
\end{lemma}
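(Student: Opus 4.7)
My plan is to prove Lemma~\ref{lemma:solution} via a sequential row-by-row construction of the symmetric banded matrix $\mathbf{B}$ (which must be $m\times m$ to be consistent with $\mathbf{B}\bX=\bY$), using hypothesis~(ii) to uniquely solve each intermediate system and invoking hypothesis~(i) only to certify consistency of the final $n-1$ rows. A dimension count motivates the strategy: a symmetric $m\times m$ matrix with bandwidth $n-1$ has $nm-\binom{n}{2}$ free real entries, whereas $\mathbf{B}\bX=\bY$ imposes $mn$ scalar equations, so the excess is exactly $\binom{n}{2}$, matching the $\binom{n}{2}$ scalar constraints packed into hypothesis~(i) (the above-diagonal entries of $\bX^T\bY-\bY^T\bX$). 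This match indicates that (i) is precisely the compatibility condition needed to close the construction.

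In the interior of the sweep, at step $i\in\{1,\ldots,m-n+1\}$ I would determine the $n$ new entries $[\mathbf{B}]_{i,i},[\mathbf{B}]_{i,i+1},\ldots,[\mathbf{B}]_{i,i+n-1}$ from the $n$ scalar equations contributed by row $i$ of $\mathbf{B}\bX=\bY$. The entries $[\mathbf{B}]_{i,j}=[\mathbf{B}]_{j,i}$ with $j<i$ in the band are already fixed from previous steps, so after moving them to the right-hand side the system collapses to a square linear equation whose coefficient matrix is the $n\times n$ submatrix of $\bX$ formed by rows $i,i+1,\ldots,i+n-1$. This submatrix is nonsingular by hypothesis~(ii), so the new entries are determined uniquely. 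After these $m-n+1$ steps, every entry $[\mathbf{B}]_{i,j}$ with $\min(i,j)\le m-n+1$ has been pinned down.

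The remaining rows $i=m-n+2,\ldots,m$ each contribute only $m-i+1<n$ fresh unknowns but still carry $n$ scalar equations, producing a cumulative over-determination of $\sum_{k=1}^{n-1}k=\binom{n}{2}$. Showing that these $\binom{n}{2}$ residual equations are automatically satisfied is the principal obstacle of the proof and the point at which hypothesis~(i) must be used. Setting $\mathbf{R}:=\mathbf{B}\bX-\bY$, the forward sweep forces the first $m-n+1$ rows of $\mathbf{R}$ to vanish, so $\bX^T\mathbf{R}$ reduces to the contribution of only the last $n-1$ rows of $\bX$ and $\mathbf{R}$; on the other hand $\bX^T\mathbf{R}=\bX^T\mathbf{B}\bX-\bX^T\bY$ is forced to be symmetric, since $\bX^T\mathbf{B}\bX$ is (by $\mathbf{B}=\mathbf{B}^T$) and $\bX^T\bY$ is (by hypothesis~(i)). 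My plan is then to choose the $\binom{n}{2}$ undetermined entries of $\mathbf{B}$ in the bottom-right triangular block so as to solve the subsystem indexed by the last $n-1$ rows, and to exploit this forced symmetry together with the invertibility provided by (ii) to conclude that the remaining $\binom{n}{2}$ scalar components of $\mathbf{R}$ also vanish. The delicate step is isolating the correct degrees of freedom in this final block; I expect a backward induction from row $i=m$ upward, pairing each residual scalar equation with the symmetry-forced vanishing of a specific entry of $\bX^T\bY-\bY^T\bX$, to give the cleanest route to the conclusion.
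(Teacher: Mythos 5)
The paper never actually proves Lemma~\ref{lemma:solution}; it imports it wholesale from \cite[Section IV-B2]{graph}, so there is no in-paper argument to compare against. Judged on its own merits, your plan is correct and is essentially the row-by-row construction used in the cited reference: you rightly observe that $\bB$ must be $m\times m$ (the statement's $\R^{n\times n}$ is a typo), the forward sweep over rows $1,\dots,m-n+1$ works exactly as you describe because hypothesis~(ii) makes the $n\times n$ submatrix of $n$ consecutive rows of $\bX$ nonsingular, and you have identified the correct mechanism for the tail, namely the forced symmetry of $\bX^T(\bB\bX-\bY)$. The one step you leave as a plan closes more easily than you anticipate, with no backward induction required. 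The bottom-right $(n-1)\times(n-1)$ block $\bB_{22}$ is entirely unconstrained by the band condition (its indices differ by at most $n-2$), so after the sweep the residual system is simply $\bB_{22}\bX_2=\bC$ with $\bB_{22}$ symmetric, where $\bX_2$ denotes the last $n-1$ rows of $\bX$ and $\bC$ collects the known terms. Because the first $m-n+1$ rows of the residual vanish, the symmetry of $\bX^T\bB\bX$ and of $\bX^T\bY$ forces $\bX_2^T\bC=\bC^T\bX_2$. Right-multiplying this identity by the null vector of $\bX_2$ (which has full row rank by~(ii)) shows every row of $\bC$ lies in the row space of $\bX_2$, and then $\bB_{22}=\bC\bX_2^T(\bX_2\bX_2^T)^{-1}$ both solves the system and is symmetric, again by $\bX_2^T\bC=\bC^T\bX_2$. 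Substituting this closed form for your anticipated induction turns your outline into a complete proof.
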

Without loss of generality, we focus on  $q=N_R$ and denote $\mathcal{H}_{\text{band}}:=\mathcal{H}_{\text{band},N_R}$. 
Let $\bU=(\bar{\mathbf{H}}_{RI}\bar{\bthe})^H$,  then the set $\mathcal{H}_{i}$, $i\in\{\text{fully},\text{band}\}$,  can be expressed as 
$
\mathcal{H}_{i}= \left\{\bar{\mathbf{H}}_{RT}+\bU^H\bar{\mathbf{H}}_{IT}\mid\bU\in\mathcal{U}_{i}\right\},
$
where 
$$
\begin{aligned}
\mathcal{U}_{\text{fully}}&=\left\{\bU=(\bar{\mathbf{H}}_{RI}\bar{\bthe})^H\mid\eqref{bartheta} ,~\bar{\bB}_I=\bar{\bB}_I^T,~ \bar{\bB}_I\in\R^{N_I\times N_I}\right\}
\end{aligned}$$
and
$$
\begin{aligned}
\mathcal{U}_{\text{band}}&=\left\{\bU=(\bar{\mathbf{H}}_{RI}\bar{\bthe})^H\mid \eqref{bartheta} \text{ and }\eqref{barBI},~\right.\\
&\left.~~~~~~~~~\bB_I=\bB_I^T,~ [\bB_I]_{i,j}=0\text{ if }|j-i|>2N_R-1\right\},
\end{aligned}$$
respectively.  Define
 $$
\begin{aligned}
\mathcal{U}:=\{\bU\mid \bU^H\bU=\bar{\mathbf{H}}_{RI}\bar{\mathbf{H}}_{RI}^H, ~\bU^T\bar{\mathbf{H}}_{RI}^H=(\bU^T\bar{\mathbf{H}}_{RI}^H)^T\}.
\end{aligned}$$
Following a similar procedure as in  \cite[Section IV-B2]{graph}, we have $\mathcal{U}_{\text{fully}}\subseteq\mathcal{U}$.  Next, we show that 
$\mathcal{U}\subseteq \mathcal{U}_{\text{band}}\cup\mathcal{N}_{\mathcal{U}},$
where $\mathcal{N}_\mathcal{U}$ is defined in \eqref{NU}. The desired result in Proposition \ref{pro:arch} then follows immediately   with 
\begin{equation}\label{N}
\mathcal{N}=\left\{\bar{\mathbf{H}}_{RT}+\bU^H\bar{\mathbf{H}}_{IT}\mid\bU\in\mathcal{U}_{\text{fully}}\cap\mathcal{N}_{\mathcal{U}}\right\}.
\end{equation}

Given  $\bU \in \mathcal{U}$, $\bU \in \mathcal{U}_{\text{band}}$ if and only if the following system admits a solution:
\begin{equation}\label{proof:pro2_2}
\left\{
\begin{aligned}
&{\mathbf{B}}_I\bar{\bM}_\bU=\bar{\boldsymbol{\Gamma}}_\bU,\\
 &\bB_I=\bB_I^T,~~ [\bB_I]_{i,j}=0\text{ if }|j-i|>2N_R-1,
\end{aligned}\right.
\end{equation}
where $\bar{\bM}_\bU=\mathcal{R}(\bar{\bY}_{II})^{-\frac{1}{2}}\bM_\bU$ and $$\bar{\boldsymbol{\Gamma}}_\bU= \Re(\bar{\bY}_{II})^{\frac{1}{2}}\boldsymbol{\Gamma}_\bU-\I(\bar{\bY}_{II})\Re(\bar{\bY}_{II})^{-\frac{1}{2}}\bM_\bU$$
with
$
\bM_\bU=\left[\Re(\mathrm{i}\bU+\mathrm{i}\bar{\mathbf{H}}_{RI}^H),~\I(\mathrm{i}\bU+\mathrm{i}\bar{\mathbf{H}}_{RI}^H)\right]$ and 
$
\boldsymbol{\Gamma}_\bU=[\Re(\bU-\bar{\mathbf{H}}_{RI}^H),\Im(\bU-\bar{\mathbf{H}}_{RI}^H)].
$ 
It is easy to check that $\bar{\bM}_\bU^T\bar{\boldsymbol{\Gamma}}_\bU$ is symmetric since  $\bM_\bU^T\boldsymbol{\Gamma}_\bU$ is symmetric  \cite[Lemma 5]{graph}.
Applying Lemma \ref{lemma:solution} with $\bX=\bar{\bM}_\bU\in\R^{N_I\times 2N_R}$ and $\bY=\bar{\boldsymbol{\Gamma}}_{\bU}\in\R^{N_I\times 2N_R}$, we can conclude that \eqref{proof:pro2_2} has a solution as long as $\bU\notin\mathcal{N}_\mathcal{U}$, where 
\begin{equation}\label{NU}
\begin{aligned}
\mathcal{N}_\mathcal{U}=\left\{\bU\mid\bar{\bM}_{\bU} \text{ has a singular submatrix}\right\}.
\end{aligned}
\end{equation}
Finally, $\mathcal{N}$ is a low-dimensional subspace of $\mathcal{H}_{\text{fully}}$ since $\mathcal{N}_\mathcal{U}$ is a low-dimensional subspace of $\C^{N_I\times N_R}$\cite[Theorem 2]{graph}.

  \bibliographystyle{IEEEtran}
\bibliography{IEEEabrv,BDRIS}

\end{document}